\documentclass[a4paper,USenglish,cleveref,autoref,numberwithinsect]{lipics-v2021} 

\usepackage{microtype}
\usepackage{complexity}

\hideLIPIcs
\nolinenumbers
\graphicspath{{figs/}}
\bibliographystyle{plainurl}

\title{Plane Hamiltonian Cycles in Convex Drawings} 

\author{Helena Bergold}{Institut f{\"u}r Informatik, Freie Universit{\"a}t Berlin, Germany}{helena.bergold@fu-berlin.de}{https://orcid.org/0000-0002-9622-8936}{Supported by DFG Grant DFG-GRK~2434.}

\author{Stefan Felsner}{Institut f{\"u}r Mathematik, Technische Universit{\"a}t Berlin, Germany}{felsner@math.tu-berlin.de}{https://orcid.org/0000-0002-6150-1998}{Partially supported by DFG Grant FE~340/13-1.}

\author{Meghana M. Reddy}{Department of Computer Science, ETH Z{\"u}rich, Switzerland}{meghana.mreddy@inf.ethz.ch}{https://orcid.org/0000-0001-9185-1246}{Supported by the Swiss National Science Foundation within the collaborative DACH project Arrangements and Drawings as SNSF Project 200021E-171681.}

\author{Joachim Orthaber}{Institute of Algorithms and Theory, Graz University of Technology, Austria}{orthaber@tugraz.at}{https://orcid.org/0000-0002-9982-0070}{Supported by the Austrian Science Fund (FWF) grant W1230.}

\author{Manfred Scheucher}{Institut f{\"u}r Mathematik, Technische Universit{\"a}t Berlin, Germany}{scheucher@math.tu-berlin.de}{https://orcid.org/0000-0002-1657-9796}{Supported by DFG Grant SCHE~2214/1-1.}

\authorrunning{H.~Bergold, S.~Felsner, M.~M.~Reddy, J.~Orthaber, and M.~Scheucher}

\ccsdesc[500]{Mathematics of computing~Combinatorics}
\ccsdesc[500]{Mathematics of computing~Graph theory}
\ccsdesc[500]{Theory of computation~Computational geometry}

\keywords{simple drawing, convexity hierarchy, plane pancyclicity, plane Hamiltonian connectivity, maximal plane subdrawing}

\relatedversion{Earlier versions of this article appeared at SoCG'24~\cite{BFROS_SOCG24} and in Bergold's thesis~\cite{Bergold_Diss23}.}

\usepackage{hyperref}
\usepackage[font=small,labelfont=bf]{caption}
\usepackage[font=small,labelfont=normalfont,labelformat=simple]{subcaption}

\captionsetup[subfigure]{justification=centering}

\newcommand{\calD}{\mathcal{D}}
\newcommand{\calC}{\mathcal{C}}
\newcommand{\bigO}{\mathcal{O}}

\crefname{conjecture}{conjecture}{conjectures}

\begin{document}

\maketitle

\begin{abstract}
A conjecture by Rafla from 1988 asserts that every simple drawing of the complete graph $K_n$ admits a plane Hamiltonian cycle.
It turned out that already the existence of much simpler non-crossing substructures in such drawings is hard to prove.
Recent progress was made by Aichholzer et al.\ and by Suk and Zeng who proved the existence of a plane path of length $\Omega(\log n / \log \log n)$ and of a plane matching of size $\Omega(n^{1/2})$ in every simple drawing of $K_n$.

Instead of studying simpler substructures, we prove Rafla's conjecture for the subclass of convex drawings, the most general class in the convexity hierarchy introduced by Arroyo et al.
Moreover, we show that every convex drawing of $K_n$ contains a plane Hamiltonian path between each pair of vertices (Hamiltonian connectivity) and a plane $k$-cycle for each $3 \leq k \leq n$ (pancyclicity), and present further results on maximal plane subdrawings.
\end{abstract}

\section{Introduction}
\label{sec:intro}

Starting from Tur{\'a}n's brick factory problem from the 1940's, which initiated the study of crossing-minimal drawings, simple drawings gained a lot of attention.
In a \emph{simple drawing}\footnote{%
In the literature, simple drawings are also called \emph{good drawings} or \emph{simple topological graphs}.}
of a graph in the plane (resp.\ on the sphere), the vertices are mapped to distinct points, and edges are drawn as simple curves such that they connect the two corresponding end-vertices but do not contain other vertices.
Moreover, two edges share at most one common point, which is either a common vertex or a proper crossing.
This in particular excludes touchings.
\Cref{fig:simple_obstructions} shows these forbidden patterns.

\begin{figure}[htb]
\centering
\subcaptionbox{\label{fig:simple_obstructions_selfint}}[.18\textwidth]{\includegraphics[page=1]{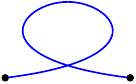}}
\subcaptionbox{\label{fig:simple_obstructions_through-vertex}}[.18\textwidth]{\includegraphics[page=2]{simple_obstructions_dcg.pdf}}
\subcaptionbox{\label{fig:simple_obstructions_touching}}[.18\textwidth]{\includegraphics[page=3]{simple_obstructions_dcg.pdf}}
\subcaptionbox{\label{fig:simple_obstructions_doublecross}}[.18\textwidth]{\includegraphics[page=4]{simple_obstructions_dcg.pdf}}
\subcaptionbox{\label{fig:simple_obstructions_adjcross}}[.18\textwidth]{\includegraphics[page=5]{simple_obstructions_dcg.pdf}}
\caption{In a simple drawing, edges are not allowed to \subref{fig:simple_obstructions_selfint}~cross themselves or \subref{fig:simple_obstructions_through-vertex}~pass through vertices.
 \subref{fig:simple_obstructions_touching}~If two edges meet in their relative interior, they have to cross (no touchings).
 \subref{fig:simple_obstructions_doublecross}~Each pair of edges crosses at most once and \subref{fig:simple_obstructions_adjcross}~adjacent edges do not cross.}
\label{fig:simple_obstructions}
\end{figure}

Instead of minimizing the number of crossings, we investigate plane structures.
In particular, we consider plane subdrawings that might appear in every simple drawing of the complete graph~$K_n$.
One of the earliest statements in that direction was the following conjecture by Rafla, which is the starting point for this paper.

\begin{conjecture}[Rafla~{\cite{Rafla1988}}]
\label{conj:rafla}
Every simple drawing of $K_n$ on $n \geq 3$ vertices contains a plane Hamiltonian cycle.
\end{conjecture}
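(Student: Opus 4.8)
The plan is to attack \Cref{conj:rafla} by an extremal argument on the length of a plane cycle, which I expect to be more robust than a fragile vertex-by-vertex induction. Fix a simple drawing $D$ of $K_n$ and let $C$ be a plane cycle in $D$ of maximum length $k$; such a cycle exists because any triangle is plane (its three edges are pairwise adjacent and adjacent edges do not cross, cf.\ \Cref{fig:simple_obstructions}). Suppose for contradiction that $k<n$, so some vertex $v$ is not on $C$. Since $C$ is non-crossing, its drawing is a Jordan curve partitioning the sphere into two disks, and $v$ lies in the interior of one of them, say $I$. As we work in $K_n$, all $n-1$ spokes from $v$ to the cycle vertices $u_1,\dots,u_k$ are present, and the entire proof reduces to showing that $v$ can always be spliced into $C$ (or that a strictly longer plane cycle can otherwise be formed), contradicting maximality.

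The heart of the argument is a \emph{reinsertion lemma}: there exist cyclically consecutive vertices $u_i,u_{i+1}$ on $C$ such that neither $vu_i$ nor $vu_{i+1}$ crosses any edge of $C$ other than $e_i=u_iu_{i+1}$ itself. Given such a ``clean slot'', replacing $e_i$ by the path $u_i\,v\,u_{i+1}$ yields a plane cycle of length $k+1$: the two new edges share $v$ and hence do not cross each other, they avoid the remaining edges of $C$ by hypothesis, and crossings with the deleted edge $e_i$ are irrelevant. To locate a slot I would record, for each $i$, the set $X_i$ of edges of $C$ crossed by the spoke $vu_i$; since $vu_i$ is adjacent to the two edges of $C$ incident to $u_i$ it crosses neither, and it meets every other edge of $C$ at most once. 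The combinatorial task is then to find an index $i$ with $X_i\cup X_{i+1}\subseteq\{e_i\}$, a pair of neighbouring spokes that are simultaneously almost uncrossed.

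If no clean slot exists, every edge $e_i$ is ``guarded'' by a crossing on one of its incident spokes, and I would convert this obstruction into a larger rerouting. Pick a spoke $vu_i$, let $f$ be the edge of $C$ nearest to $v$ that it crosses, and use the detour along $vu_i$ up to that crossing and then along $f$ to reconnect the two arcs of $C$ through $v$. One then argues, via a discharging or potential argument on the total number of spoke--cycle crossings taken over all maximum plane cycles and all admissible choices of the excluded vertex, that this quantity can always be strictly decreased while keeping the cycle plane, until a clean slot must appear; any successful splice produces a plane $(k+1)$-cycle and the desired contradiction. The rotation system of $D$ provides a convenient combinatorial bookkeeping device here, since the crossing pattern of the spokes against $C$ can be read off from it consistently.

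The main obstacle is exactly this reinsertion/rerouting step in full generality: for an arbitrary simple drawing the spokes from $v$ may be so thoroughly entangled with $C$ that no clean slot exists and the descent argument stalls, and controlling this entanglement is precisely what keeps \Cref{conj:rafla} open. The structural leverage I would first exploit to break the deadlock is convexity in the sense of the Arroyo et al.\ hierarchy: when every triangle has a prescribed convex side, the spokes are forced to behave monotonically with respect to the Jordan curve $C$, which guarantees a clean slot and makes the extremal argument go through. Proving the reinsertion lemma under this hypothesis is the concrete first target, but the genuine difficulty of the conjecture—and the step I expect to be hardest—is to \emph{remove} this assumption, namely to show that even with no convexity hypothesis the maximality of $C$ is incompatible with every spoke being blocked.
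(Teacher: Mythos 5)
You have not proved the statement, and to be fair the paper does not either: \Cref{conj:rafla} remains open for general simple drawings, and the paper only establishes it for the subclass of convex drawings (\Cref{thm:rafla_convex}). Your own write-up concedes this, since the ``reinsertion lemma'' and the discharging/potential descent that your extremal argument hinges on are left entirely unproven. But beyond incompleteness there are two concrete defects. First, your rerouting step is ill-defined as a graph-theoretic object: following the spoke $vu_i$ up to its first crossing with an edge $f$ of $C$ and then continuing ``along $f$'' produces a closed curve made of partial edges, not a cycle in $K_n$; a plane Hamiltonian cycle must be a subdrawing consisting of entire edges, so this detour cannot be the basis of a length-increasing or potential-decreasing move. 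Second, note that since adjacent edges never cross in a simple drawing, your clean-slot condition $X_i\cup X_{i+1}\subseteq\{e_i\}$ actually forces $X_i=X_{i+1}=\emptyset$, i.e.\ two consecutive spokes completely uncrossed by $C$ --- a much stronger demand than the phrasing suggests, and you give no mechanism (even heuristic) for why the total spoke--cycle crossing count could always be strictly decreased when no such slot exists.

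Even restricted to convex drawings, where you claim convexity ``forces the spokes to behave monotonically'' and guarantees a clean slot, this is an assertion, not an argument, and it is not how the paper's proof works. The paper does not run an extremal argument on a maximum plane cycle at all. Instead it fixes a star vertex $v_\star$, exploits that the star edges form a plane spanning star, and classifies the edges $\{v,v+1\}$ between rotation-consecutive vertices as good or bad (bad meaning star-crossing, with a witness $w$). The key structural input is \Cref{obs:ConvexSide} together with \Cref{lemma:edges_nonconvexside}: for a bad edge $b$, the triangle $T_b$ has a unique convex side, and every edge joining two vertices of the non-convex side stays inside it. This lets the paper split the drawing along $T_b$ into two vertex-disjoint convex subdrawings and prove, by induction, the \emph{stronger} statement of plane Hamiltonian connectivity (\Cref{thm:stpath_convex}); the Hamiltonian cycle then falls out by closing an $s$--$t$-path with one star edge. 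Strengthening the induction hypothesis to paths between prescribed endpoints is precisely what makes the recursion compose across the two sides of $T_b$ --- the ``fragile vertex-by-vertex induction'' you dismissed is, suitably strengthened, the approach that actually succeeds, whereas your insertion scheme gives no handle on how the recursive pieces would be glued.
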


For \emph{geometric drawings}, which correspond to point sets in the plane connected via straight-line segments, the existence of a plane Hamiltonian cycle can easily be shown.
The study of plane structures gets significantly harder in the more general setting of simple drawings.
As one of the first results on guaranteed plane structures in simple drawings of~$K_n$, Pach, Solymosi, and T{\'o}th~\cite{PachSolymosiToth2003} proved the existence of any fixed plane tree of size $\bigO((\log n)^{1/6})$.
Subsequently, plane matchings and paths were investigated as a relaxation of plane cycles \cite{pt-2010-detg, FoxSudakov2009, Suk13, Fulek2014, RuizVargas17}.
The best bounds to date are by Aichholzer, Garc{\'i}a, Tejel, Vogtenhuber, and Weinberger~\cite{agtvw-2024-twisted_journal}, who showed that every simple drawing of $K_n$ contains a plane matching of size $\Omega(n^{1/2})$ and a plane path of length $\Omega(\log n / \log \log n)$.
The latter was also independently proven by Suk and Zeng~\cite{sz-2024-upcstg}.

Rafla's conjecture itself was verified for all simple drawings of~$K_n$ on $n \leq 9$ vertices~\cite{AbregoAFHOORSV2015}.
In their paper on plane matchings, Aichholzer et al.~\cite{agtvw-2024-twisted_journal} studied so-called generalized twisted drawings of~$K_n$ and proved that they always contain a plane Hamiltonian path and, for odd~$n$, also a plane Hamiltonian cycle.

Aichholzer, Orthaber, and Vogtenhuber~\cite{aov-2024-tcfhcsdcg} recently introduced a variant of Rafla's conjecture, which asserts that simple drawings of $K_n$ are \emph{plane Hamiltonian-connected}, i.e., there exists a plane Hamiltonian path between every pair of vertices.

\begin{conjecture}[Aichholzer, Orthaber, Vogtenhuber~{\cite{aov-2024-tcfhcsdcg}}]
\label{conj:stpath}
For each pair of vertices $s,t$ in a simple drawing of $K_n$ there exists a plane Hamiltonian path from $s$ to~$t$.
\end{conjecture}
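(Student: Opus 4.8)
The plan is to prove \cref{conj:stpath} by induction on $n$, at each step deleting one vertex and reinserting it into a plane Hamiltonian path supplied by the inductive hypothesis. Fix the two prescribed vertices $s,t$; the base cases $n\le 3$ are immediate. For the step, pick any third vertex $u\notin\{s,t\}$. Deleting $u$ leaves a simple drawing of $K_{n-1}$, so by induction there is a plane Hamiltonian $s$-$t$ path $P$ in $D-u$. Since all edges of $D$ incident to $u$ are still present, the whole problem reduces to \emph{reinserting} $u$: we must splice $u$ into $P$, or locally rewire $P$ using the star of $u$, so that the result is again a plane Hamiltonian $s$-$t$ path of the full drawing $D$.

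I would first aim for the cheapest reinsertion, subdividing a single edge. Concretely, I would try to prove an \emph{insertability lemma}: there is an edge $\{a,b\}$ consecutive on $P$ such that replacing it by the two edges $\{a,u\},\{u,b\}$ of $D$ introduces no crossing with the remaining edges of $P$. To locate such an edge, I would study the arrangement formed by $P$ together with the star $S_u$ of all $n-1$ edges at $u$. The path $P$ is a plane curve from $s$ to $t$; walking along $P$ and recording, via the rotation system of $D$, on which side each star edge $\{u,x\}$ leaves $x$, one obtains a linking/parity pattern that should force two cyclically adjacent path vertices to be reachable from $u$ without crossing $P$. This is exactly the kind of statement that holds automatically whenever the cyclic order of edges around $u$ is compatible with a linear order along $P$.

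The crux is that in a general simple drawing nothing forces this compatibility. The edges of $S_u$ may cross $P$ arbitrarily often, and it can happen that \emph{every} edge $\{a,b\}$ of $P$ is blocked, so single-edge insertion fails. Ruling out this ``fully blocked'' configuration is the heart of the conjecture and what keeps it open in general; it is the step I expect to resist a short argument, since any proof must control the global crossing pattern of a star against a plane Hamiltonian path, the same bottleneck behind the current $\Omega(\log n/\log\log n)$ lower bound for plane paths. I see two lines of attack. First, strengthen the induction by carrying an invariant that keeps $P$ ``sorted'' with respect to a global cyclic structure of $D$, so that insertability is guaranteed; convexity (in the sense of \cref{conj:rafla}'s convex case) is precisely what supplies such an order, whereas for arbitrary drawings no global order need exist, which is why the general statement is harder. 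Second, replace the local splice by a global augmentation: allow $u$ to cut $P$ into several subpaths and reconnect them through $u$ together with a bounded number of edge exchanges, in the spirit of P\'osa rotations, and then show that planarity can be maintained throughout the rewiring. Making either route unconditional for all simple drawings is the genuine obstacle, and I would expect most of the work to lie in taming the crossing structure of a single star against $P$.
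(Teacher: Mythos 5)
Your proposal does not prove the statement, and you concede as much yourself: the entire induction hinges on the \emph{insertability lemma} --- that some edge $\{a,b\}$ of the inherited path $P$ can be replaced by $\{a,u\},\{u,b\}$ without creating crossings --- and you explicitly admit that in a general simple drawing the ``fully blocked'' configuration cannot be excluded, with the P\'osa-rotation fallback likewise left as an unproven sketch. So the attempt has a genuine gap, and the gap is precisely its central lemma. For calibration: this is not a lapse of yours alone. \cref{conj:stpath} is stated in the paper as an \emph{open conjecture}; the paper does not prove it for general simple drawings either, and establishes it only for the subclass of convex drawings (\cref{thm:stpath_convex}). Your honest identification of the obstacle --- controlling the global crossing pattern of the star at $u$ against a plane Hamiltonian path --- coincides with the reason the general problem remains open, but identifying the bottleneck is not the same as overcoming it, so as a proof of the stated conjecture this fails.

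It is worth comparing your plan with how the paper actually handles the convex case, since you correctly guess that convexity supplies the needed global order, but your induction is structured quite differently. The paper does not delete and re-insert a single vertex. It fixes $t$ as a star vertex, labels the remaining vertices by the rotation of $t$, and distinguishes two cases: if every edge between rotation-consecutive vertices is \emph{good} (crosses no star edge), the monotone path around the rotation is plane by \cref{obs:HCplane}; otherwise a bad edge $b=\{v,v+1\}$ spans with $t$ a triangle $T_b$ whose two sides partition the remaining vertices, and the key structural fact is \cref{lemma:edges_nonconvexside}: edges joining vertices of the non-convex side of $T_b$ stay inside that side, so $T_b$ acts as a separator. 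Induction is then applied to the two sides separately and the resulting subpaths are concatenated through $v$ (plus one star edge to $t$). In other words, the paper removes an entire region at once rather than one vertex, which completely sidesteps the star-versus-path crossing analysis your reinsertion step would require; even restricted to convex drawings, your single-vertex splice is not known to go through, so salvaging your route would still demand an analogue of this triangle separation rather than the insertability lemma as stated.
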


They proved \Cref{conj:rafla,conj:stpath} for the subclasses of strongly c-monotone and cylindrical drawings\footnote{%
We are not aware of any relation between these subclasses and convex drawings.
A schematic overview of known relations between various classes of simple drawings is given in \cite[Figure~7]{aov-2024-tcfhcsdcg}.},
and showed that their conjecture is indeed a strengthening:
If it was true for all simple drawings of $K_n$ it would imply that also Rafla's conjecture holds in general.

\subparagraph*{Convex drawings.}

In this article, we investigate plane Hamiltonian paths and cycles in the subclass of convex drawings.
Convex drawings were introduced by Arroyo, McQuillan, Richter and Salazar~\cite{ArroyoMRS2017_convex} as the most general class of the convexity hierarchy between geometric and simple drawings of~$K_n$.
Arroyo, Richter, and Sunohara~\cite{ArroyoRS2020} already showed Rafla's conjecture for \emph{h-convex} drawings, which are a subclass of convex drawings.
The notion of convexity in simple drawings generalizes the classic convexity and is based on \emph{triangles} of a drawing, i.e., the subdrawings induced by three vertices.
Since in a simple drawing the edges of a triangle do not cross, a triangle partitions the plane (resp.\ the sphere) into exactly two connected components.
The closures of these components are the two \emph{sides} of the triangle.
In particular, the three vertices forming the triangle lie in both sides of the triangle.
A side $S$ of a triangle is \emph{convex} if every edge that has its two incident vertices in~$S$ is fully contained inside~$S$.
A~simple drawing $\calD$ of $K_n$ is
\begin{itemize}
\item \emph{convex} if every triangle of $\calD$ has a convex side,
\item \emph{h-convex} (short for hereditarily convex) if there is a choice of a convex side $S_T$ for every triangle $T$ of $\calD$ such that, for every triangle $T'$ contained in $S_T$, it holds $S_{T'} \subseteq S_T$, and
\item \emph{f-convex} (short for face convex) if there exists a marking face~$F$ in $\calD$ such that for all triangles the side not containing $F$ is convex.
\end{itemize}
Clearly all f-convex drawings are h-convex and all those are in turn convex. 

Arroyo et al.~\cite{ArroyoMRS2017_convex} showed that convex drawings of~$K_n$ can be characterized in terms of their induced subdrawings on $5$ vertices.
To define this, we call two simple drawings of the same graph \emph{isomorphic}\footnote{%
This isomorphism is usually referred to as \emph{weak isomorphism} since there also exist stronger notions.}
if they have the same pairs of crossing edges up to relabeling of the vertices.
\Cref{fig:n5_fivetypes} shows the five non-isomorphic simple drawings of $K_5$, which are denoted as Type~I~to~V; cf.~\cite{AbregoAFHOORSV2015}.
A simple drawing is convex if and only if every induced subdrawing on $5$ vertices is isomorphic to a geometric drawing, i.e., it is of Type~I, II, or~III~\cite{ArroyoMRS2017_convex}.
The above mentioned \emph{generalized twisted drawings}, for which a plane Hamiltonian path exists \cite{agtvw-2024-twisted_journal}, are simple drawings where all 5-tuples are of Type~V~\cite{agtvw-2023-crsgtd5t}.

\begin{figure}[ht]
\centering
\includegraphics{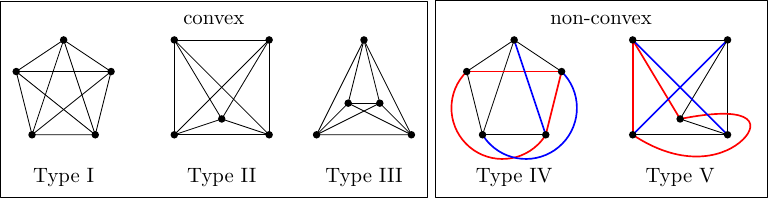}
\caption{The five non-isomorphic ways to draw the complete graph~$K_5$.
Type IV and Type V are non-convex because the red triangles have no convex side, as witnessed by the blue edges.}
\label{fig:n5_fivetypes}
\end{figure}

\subparagraph*{Maximal plane subdrawings.}

The study of plane subdrawings raises the question about the maximal number of edges in a plane subdrawing.
In this article, we investigate maximal plane subdrawings in convex drawings and show that they are in fact maximum plane.
A~\emph{maximum} plane subdrawing is a plane subdrawing with the largest number of edges among all plane subdrawings.
A plane subdrawing is called \emph{maximal} if adding any further edge from the underlying drawing would result in a crossing.
Garc\'ia, Pilz and Tejel~\cite{GarciaTejelPilz2021} showed that it is \NP-complete to determine a \emph{maximum} plane subdrawing in a simple drawing.
However, every simple drawing of $K_n$ contains a plane subdrawing with at least $2n-3$ edges~\cite[Corollary~3.4]{GarciaTejelPilz2021}, which is best possible as witnessed by the geometric drawing of $n$ points in convex position.

\subparagraph*{Empty $k$-cycles.}

Besides Hamiltonian cycles and maximum/maximal plane subdrawings, we also investigate smaller plane structures.
We introduce empty $k$-cycles as a link between empty triangles ($k=3$), which are known to exist in every simple drawing of $K_n$~\cite{Harborth98,AichholzerHPRSV2015}, and plane Hamiltonian cycles ($k=n$).
Similar as in the case of a triangle, a plane cycle partitions the plane (resp.\ sphere) into two connected components, which we call the \emph{sides} of the cycle.
A~side~$S$ is \emph{empty} if there are no vertices contained in the interior of~$S$.
An \emph{empty $k$-cycle} is a plane cycle of length $k$ such that at least one of its two sides is empty.

\subsection*{Our contribution}

In \Cref{sec:paths_and_cycles} we prove \Cref{conj:rafla,conj:stpath} for the class of convex drawings.
More specifically, in \Cref{thm:rafla_convex} we show the existence of a plane Hamiltonian cycle and in \Cref{thm:stpath_convex} the existence of a plane Hamiltonian path connecting any pair of vertices.

In \Cref{sec:maxplane} we show that in a convex drawing of $K_n$ all maximal plane subdrawings have the same number of edges (\Cref{thm:convex-maximal-is-maximum}).
This allows to find \emph{maximum} plane subdrawings in a greedy fashion and, in particular, to extend every plane Hamiltonian cycle by at least $n-3$ edges while preserving planarity (\Cref{cor:extend_to_2n_minus_3}).
Consequently, in every convex drawing of~$K_n$ there is a \emph{plane Hamiltonian subdrawing} with $2n-3$ edges, i.e., a plane subdrawing with $2n-3$ edges that contains a Hamiltonian cycle.

On top of that, we show in \Cref{sec:strengthenings} that for every convex drawing and every fixed vertex $v_\star$ there exists a plane Hamitonian subdrawing with a spanning star centered at $v_\star$ (\Cref{theorem:convex_HC}).
This structural result implies that in every convex drawing of $K_n$ every edge is contained in at least one plane Hamiltonian path (\Cref{thm:HP_prescribed_edges}).
Moreover, we obtain the existence of an empty $k$-cycle for every integer $k$ with $3 \leq k \leq n$ (\Cref{cor:emptykcycle}).

\section{Hamiltonian paths and cycles}
\label{sec:paths_and_cycles}

In this section we prove the two conjectures for all convex drawings.
In a first step we show that for each pair of vertices in a convex drawing there is a plane Hamiltonian path connecting them (\Cref{conj:stpath}).
From this we then derive the existence of a plane Hamiltonian cycle in every convex drawing (\Cref{conj:rafla}).

\begin{figure}[htb]
\centering
\subcaptionbox{\label{fig:star-cross-good}}[.49\textwidth]{\includegraphics[page=1]{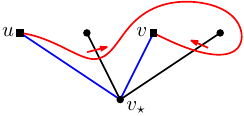}}
\subcaptionbox{\label{fig:star-cross-bad}}[.49\textwidth]{\includegraphics[page=2]{simple-star-cross-direction_dcg.pdf}}
\caption{\subref{fig:star-cross-good}~The edge $\{u,v\}$ has to cross star edges in a specific direction. \subref{fig:star-cross-bad}~Otherwise it would have to cross that star edge more than once, cross itself, or cross an adjacent star edge.}
\label{fig:star-cross-direction}
\end{figure}

For the following proofs we fix one vertex~$v_\star$, which we call the \emph{star vertex}.
By the properties of a simple drawing the edges incident to $v_\star$, which we refer to as \emph{star edges}, do not cross each other and hence form a plane spanning star.
The clockwise cyclic order of the star edges in the drawing is the \emph{rotation} of~$v_\star$.
In this context we often identify the edges with their incident vertices that are different from~$v_\star$.
Since convexity and the existence of a plane substructure are independent of the choice of the outer face of a drawing, in the following illustrations we consider $v_{\star}$ to be on the outer face.
Keep in mind that in a simple drawing of $K_n$, a non-star edge $\{u,v\}$, directed from $u$ to~$v$, has to cross star edges always from $u$ towards $v$ in the rotation of~$v_\star$; as shown in \Cref{fig:star-cross-good}.
Otherwise the edge would enter an area, marked gray in \Cref{fig:star-cross-bad}, where it cannot leave anymore without violating simplicity.
Moreover the statements in this paper are independent from the labeling of the vertices.
For convenience, we assume that $v_{\star} = n$ and that the other vertices are labeled from $1$ to $n-1$ according to the rotation of~$v_{\star}$.
To deal with the cyclic order of the vertices around $v_{\star}$, we use arithmetics modulo $n-1$.

For every non-star edge $e=\{u,v\}$, the triangle spanned by the two vertices $u$ and $v$ of $e$ and the star vertex $v_\star$ is denoted as~$T_e$.
An edge $e$ is \emph{star-crossing} if it crosses at least one of the star edges.
In particular, we focus on the edges $e = \{v, v+1\}$ with $1\leq v \leq n-1$.
We call such an edge \emph{good} if it is not star-crossing.
Otherwise, if edge $b = \{v, v+1\}$ crosses a star edge~$\{w,v_{\star}\}$, then we say that $b$ is a \emph{bad edge} with \emph{witness}~$w$; see \Cref{fig:badedge}.
Note that the side of the triangle $T_b$ containing the witness is not convex.
However, by definition every triangle in a convex drawing has at least one convex side.
Thus, the triangle $T_b$ has exactly one convex side, which is the side not containing the witness.

\begin{figure}[htb]
\centering
\includegraphics{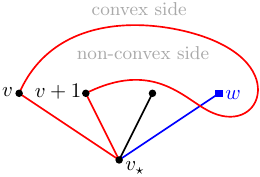}
\caption{A bad edge $b = \{v,v+1\}$ with witness~$w$ (blue). The triangle $T_b$ is highlighted in red.}
\label{fig:badedge}
\end{figure}

This shows that all witnesses are in the same side.
Furthermore, since $v$ and $v+1$ are consecutive in the rotation of~$v_\star$, all vertices in the interior of the non-convex side of $T_b$ are witnesses of~$b$.

\begin{observation}
\label{obs:ConvexSide}
Let $b=\{v,v+1\}$ be a bad edge.
Then the side of the triangle $T_b$ not containing the witnesses is the unique convex side.
Moreover, a vertex is a witness of $b$ if and only if it is in the interior of the non-convex side of~$T_b$.
\end{observation}

By the definition of convexity, an edge between two vertices in a convex side is contained in this convex side.
In the following lemma, we show that a similar property holds for the non-convex side of the triangle $T_b$ induced by a bad edge~$b$.

\begin{lemma}
\label{lemma:edges_nonconvexside}
Let $b$ be a bad edge and let $e$ be a non-star edge connecting two vertices from the non-convex side of~$T_b$.
Then $e$ does not cross the triangle $T_b$ and is therefore contained in the non-convex side of~$T_b$.
\end{lemma}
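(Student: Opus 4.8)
The plan is to reduce everything to showing that $e$ crosses none of the three edges of the triangle $T_b$. Writing $s_1=\{v,v_\star\}$ and $s_2=\{v+1,v_\star\}$ for the two star edges bounding $T_b$, if I can rule out crossings of $e$ with $b$, $s_1$, and $s_2$, then $e$ meets the boundary $\partial S=T_b$ of the non-convex side $S$ only possibly at shared vertices. Since both endpoints of $e$ lie in the closed side $S$, an application of the Jordan curve theorem then forces the curve $e$ to stay inside $S$, giving containment. So the real content is the three non-crossing claims.

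First I would record the possible endpoints of $e$. Since $e$ is a non-star edge, $v_\star$ is not an endpoint; by \Cref{obs:ConvexSide} every vertex in the interior of $S$ is a witness of $b$, and the only vertices of the drawing on $\partial S$ are $v_\star,v,v+1$. Hence each endpoint of $e$ is either a witness of $b$ or one of $v,v+1$. The cases where an endpoint equals $v$ or $v+1$ are the easy ones: if, say, $x=v$, then $e$ is adjacent to both $b$ and $s_1$ and so crosses neither. For the remaining edge $s_2$ I would pass to the induced $K_4$ on $\{v,v+1,v_\star,y\}$; as $y$ is a witness, $\{v_\star,y\}$ crosses $b$, and since the drawing is convex this $K_4$ is isomorphic to a geometric one and thus has at most one crossing, which must be $b\times\{v_\star,y\}$. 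This excludes the opposite pair $e\times s_2=\{v,y\}\times\{v+1,v_\star\}$. The case $x=v+1$ is symmetric.

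The main case is that both endpoints $x,y$ are witnesses. Here I would invoke the convexity characterization: the induced subdrawing on the five vertices $\{v,v+1,v_\star,x,y\}$ is isomorphic (same crossing pairs) to a geometric drawing of type~I, II, or~III, so it suffices to reason in a straight-line realization with those crossing pairs. In any such realization the witness conditions, namely that $\{v_\star,x\}$ and $\{v_\star,y\}$ cross the segment $\{v,v+1\}$, force $x$ and $y$ to lie strictly on the side of the line $\ell$ through $v$ and $v+1$ opposite to $v_\star$. Consequently the segment $e=\{x,y\}$ lies in the open half-plane bounded by $\ell$ not containing $v_\star$, while $s_1,s_2$ lie in the closed half-plane containing $v_\star$ and $b\subset\ell$; therefore $e$ crosses none of $b,s_1,s_2$. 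Transporting these non-crossings back through the isomorphism yields the same conclusion in the convex drawing.

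I expect the main obstacle to be making the reduction to a straight-line realization rigorous together with the half-plane argument: in particular, arguing that the two forced witness crossings alone already determine the side on which $x$ and $y$ lie, \emph{independently} of which of the types~I--III occurs, and handling the incidences at the shared vertices $v,v+1$ carefully so that the final Jordan-curve step---endpoints in $S$ and no boundary crossing imply $e\subseteq S$---is fully justified.
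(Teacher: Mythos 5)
Your proposal is correct, but its core argument is genuinely different from the paper's. The easy cases coincide: for $e=\{v,y\}$ with $y$ a witness you use exactly the paper's $K_4$ argument (a simple drawing of $K_4$ has at most one crossing, and $b$ already crosses $\{y,v_\star\}$; note that convexity is not even needed for this step). For the main case $e=\{x,y\}$ with both endpoints witnesses, the paper argues by contradiction inside the drawing itself: parity forces $e$ to cross $b$ and exactly one star edge of $T_b$, and then convexity of the triangle $T_e$ together with a region-tracing argument forces $\{x,v\}$ (there $\{w,v\}$) to cross its adjacent edge $b$, a contradiction. You instead invoke the characterization of Arroyo et al.~\cite{ArroyoMRS2017_convex} quoted in the introduction---in a convex drawing every induced $5$-vertex subdrawing is weakly isomorphic to a geometric one---and finish with a half-plane computation: the two witness crossings force $x$ and $y$ strictly onto the side of the line through $v,v+1$ opposite to $v_\star$, so the segment $\{x,y\}$ meets none of the three triangle edges, and since weak isomorphism preserves precisely the pairs of crossing edges, these non-crossings transfer back to the convex drawing. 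This is rigorous as sketched; the obstacle you anticipate is not real, since the half-plane argument never needs to know which of types I, II, or III occurs. What each approach buys: yours is shorter and nearly computation-free, but it rests on the full strength of the $5$-vertex characterization theorem, whereas the paper's proof uses only the definition of convexity and develops the region-tracing technique that it reuses later (e.g.\ in the proof of \Cref{lem:twobadedges1}). Two small loose ends you should close: dispose of the degenerate case $e=b$ explicitly (your endpoint case tacitly assumes the second endpoint is a witness), and note that the final containment step needs one endpoint of $e$ in the open interior of the non-convex side (true whenever $e\neq b$), since ``both endpoints in the closed side'' alone would not determine on which side the curve lies.
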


\begin{proof}
Clearly the edge $e = b$ is contained in the non-convex side of $T_b$.

Next, we consider an edge~$e$ from a vertex $w$ in the interior of the non-convex side of~$T_b$ to a vertex of~$b$.
By \Cref{obs:ConvexSide}, $w$ is a witness.
Hence, in the subdrawing induced by the four vertices $v_\star,v,v+1,w$ the bad edge $b$ and the star edge~$\{w,v_\star$\} cross.
Since a simple drawing of $K_4$ has at most one crossing, $e$ cannot cross the triangle~$T_b$.

Finally, we consider an edge $e = \{ w, w' \}$ connecting two interior vertices of the non-convex side of $T_b$, i.e., $e$ connects two witnesses of~$b$.
We assume, without loss of generality, that $v,v+1,w,w'$ appear in exactly this order in the rotation of~$v_\star$, because $v$ and $v+1$ are consecutive vertices in that rotation.

Assume towards a contradiction that $e$ crosses~$T_b$.
Since both vertices of $e$ lie in the same side of~$T_b$, the edge $e$ crosses $T_b$ an even number of times.
This implies that $e$ crosses exactly two of the three edges of $T_b$ and, in particular, at least one of its star edges.
By symmetry, we assume that $e$ crosses $\{v+1,v_\star\}$.
Then $e$ passes through the region bounded by $\{v+1,v_\star\}$, and by the parts of $b$ and $\{w,v_\star \}$ up to their crossing; see \Cref{fig:non-convex-side_non-simple}.
Consequently $e$ crosses $b$, which fixes all crossings of $e$ with~$T_{b}$.

Now consider the triangle $T_{e}$.
The side of~$T_{e}$ containing $v+1$ is not convex because $\{v+1,v_\star\}$ crosses~$e$.
Since $b$ crosses all three edges of~$T_{e}$, $v$ lies in the unique convex side of~$T_{e}$.
Therefore $\{w,v\}$ lies in this convex side.
Starting at $w$, $\{w,v\}$ passes through the region bounded by $\{w,v_\star \}$, and by parts of $e$ and $\{v+1,v_\star\}$, leaving through $\{v+1,v_\star\}$; see \Cref{fig:non-convex-side_non-convex2}.
Also $b$ passes through this region, crossing $e$ and $\{w,v_\star \}$.
Thus, $\{w,v\}$ crosses its adjacent edge $b$; a contradiction.
This shows that in all cases $e$ cannot cross $T_b$ and hence stays in the non-convex~side.
\end{proof}

\begin{figure}[htb]
\centering 
\subcaptionbox{\label{fig:non-convex-side_non-simple}}[.49\textwidth]{\includegraphics[page=3]{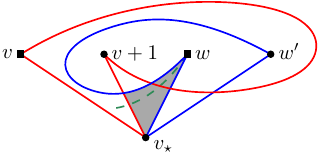}}
\subcaptionbox{\label{fig:non-convex-side_non-convex2}}[.49\textwidth]{\includegraphics[page=1]{newlemma_dcg.pdf}}
\caption{\subref{fig:non-convex-side_non-simple}~If an edge $e = \{w,w'\}$ (dotted blue) between two witnesses crosses a star edge of~$T_{b}$, then it crosses~$b$.
\subref{fig:non-convex-side_non-convex2}~Moreover, in this case the edge $\{w,v\}$ (dashed green) crosses its adjacent edge~$b$.}
\label{fig:non-convex-side}
\end{figure}

\Cref{lemma:edges_nonconvexside} is the key tool to show \Cref{conj:stpath} for convex drawings.
It will also be important that for two non-star edges $e$ and $e'$ that do not cross star edges, the rotation of~$v_\star$ determines whether $e$ and $e'$ cross.

\begin{observation}
\label{obs:HCplane}
Let $e = \{u,v\}$ and $e' = \{u',v'\}$ be two non-star edges that do not cross any star edge.
If $u,v,u',v'$ appear in this cyclic order around~$v_\star$, then $e$ and $e'$ do not cross.
\end{observation}

The main idea to show the existence of a Hamiltonian path between each pair of vertices is to divide the drawing into two parts, one part consisting of the vertices in the convex side of a triangle $T_b$ and the other one with the vertices from the non-convex~side.

\begin{theorem}
\label{thm:stpath_convex}
For each pair of vertices $s,t$ in a convex drawing of $K_n$ there exists a plane Hamiltonian path from $s$ to~$t$.
\end{theorem}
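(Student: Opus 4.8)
The plan is to prove the statement by induction on $n$, using a bad edge to split the drawing into two smaller convex subdrawings and then gluing plane Hamiltonian paths obtained from the induction hypothesis. The base cases for small $n$ are straightforward, since a convex drawing on few vertices admits a plane Hamiltonian path between any two prescribed endpoints (indeed on $n\le 4$ vertices one checks this directly). For the inductive step I would first dispose of the easy case where \emph{every} edge $\{v,v+1\}$ is good, i.e., none of them is star-crossing: in that situation the cyclic sequence of vertices $1,2,\dots,n-1$ around $v_\star$ together with the star vertex yields, by \Cref{obs:HCplane}, a plane Hamiltonian cycle along consecutive edges, and one can route a plane Hamiltonian path between any $s$ and $t$ by combining a suitable arc of this cycle with star edges.

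The heart of the argument is the case where some bad edge $b=\{v,v+1\}$ exists. I would use $b$ and its triangle $T_b$ to partition the vertex set into the set $A$ of witnesses (the interior vertices of the non-convex side, together with $v,v+1$) and the set $B$ consisting of $v_\star$ together with the remaining vertices (those in the convex side). By \Cref{lemma:edges_nonconvexside}, every edge between two witnesses stays inside the non-convex side of $T_b$, so the subdrawing induced on $A$ is a convex drawing that is topologically ``sealed off'' from $B$ by $T_b$; symmetrically, convexity of the other side controls the subdrawing on $B$. The key structural point is that any plane Hamiltonian path I build inside the closed non-convex region cannot cross any plane structure built inside the closed convex region, because the two regions meet only along the boundary $T_b$. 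This is what lets me combine the two recursively obtained paths without introducing crossings.

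To make the gluing work I would track where $s$ and $t$ fall relative to the partition and choose the recursive endpoints accordingly. The cleanest handle is to route through the edge $b$ itself: recurse on the non-convex side to get a plane Hamiltonian path on $A$ that uses one endpoint among $\{v,v+1\}$, recurse on the convex side to get a plane Hamiltonian path on $B$ ending at the matching vertex, and concatenate them across the edge $b$ (or across a star edge). The prescribed endpoints $s$ and $t$ are passed down to whichever side they belong to, while the connector vertices $v$, $v+1$, and $v_\star$ serve as the controlled interface endpoints; one has to consider the subcases according to whether $s$ and $t$ lie on the same side or on opposite sides of $T_b$, but in each subcase the freedom to prescribe both endpoints in the smaller instances (which is exactly the inductive hypothesis) supplies the needed flexibility.

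The main obstacle I anticipate is the bookkeeping in the gluing step: ensuring that in every subcase the chosen interface endpoints are simultaneously realizable as endpoints of the two sub-paths \emph{and} that the connecting edge does not cross either sub-path. Planarity of the connection is essentially guaranteed by \Cref{lemma:edges_nonconvexside} and the side structure of $T_b$, but verifying that the two sub-paths together span all of $V$ exactly once and meet the endpoint constraints for $s$ and $t$ requires carefully enumerating where $s,t\in\{v,v+1\}$ or coincide with $v_\star$. I would handle these boundary situations by a short case analysis, and I expect that choosing $b$ to be an \emph{extremal} bad edge (for instance, one whose witness set is minimal, or one adjacent to $s$ or $t$ when possible) will reduce the number of degenerate cases and make the induction close cleanly.
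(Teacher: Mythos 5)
Your overall skeleton---induction on $n$, the dichotomy between the all-good-edges case and the bad-edge case, splitting along the triangle $T_b$ via \cref{lemma:edges_nonconvexside}, and gluing recursively obtained paths at the triangle vertices---is the same as the paper's. But there is a genuine gap in the gluing step, and it stems from one missing idea: the paper chooses the star vertex to be one of the prescribed endpoints, setting $t = v_\star$ at the start of the induction. You instead keep $v_\star$ as an independently fixed vertex and ``pass down'' $s$ and $t$ to whichever side of $T_b$ they lie in. This breaks precisely in the subcase you defer to bookkeeping: when $s$ and $t$ both lie in the interior of the \emph{same} side of $T_b$.

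Concretely, suppose both $s$ and $t$ are witnesses of the (possibly unique) bad edge $b$, i.e., both lie in the interior of the non-convex side $S_N$; no choice of extremal bad edge can avoid this. Within your framework every edge of the constructed path lies in one of the two closed sides, and a subpath inside $S_N$ can reach the interface only at $v$ or $v+1$: edges from witnesses to $v_\star$ are star edges crossing $b$, so they are contained in neither side. Since the path starts and ends inside $S_N$ but must also cover $V_C \cup \{v_\star\}$, it is forced to have the form: a path in $S_N$ from $s$ to one of $v, v+1$, then a path in $S_C$ between $v$ and $v+1$ covering the entire convex side, then a path in $S_N$ ending at $t$. Hence the non-convex side must be covered by \emph{two vertex-disjoint} plane paths with prescribed endpoint pairs $(s,v)$ and $(v+1,t)$. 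That is strictly stronger than the inductive hypothesis, which supplies a single Hamiltonian path between two prescribed endpoints, so your claim that prescribing both endpoints ``supplies the needed flexibility'' is exactly where the argument fails. The paper's choice $t = v_\star$ eliminates this case altogether: $t$ is then a vertex of every triangle $T_b$, hence always sits on the interface, and the only case distinction is where $s$ lies; in both cases one glues two single Hamiltonian paths, one per side (plus possibly the star edge $\{v+1,t\}$), which is exactly what induction provides.
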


\begin{proof}
We prove this statement by induction on~$n$.
The base cases $n=2$ and $n=3$ are trivial.
For the inductive step, we pick $t$ as the star vertex~$v_\star$.
If there is no bad edge, we start at $s$, traverse all other vertices in cyclic order around~$t$ via good edges, and finally take a star edge to terminate in~$t$; see \Cref{fig:stpath_good}.
This Hamiltonian path is plane by \Cref{obs:HCplane}.

Now assume that there is a bad edge $b=\{v,v+1\}$.
The triangle $T_b$ partitions the remaining vertices into two parts:
The vertices $V_C$ in the interior of its convex side~$S_C$, which might be an empty set, and the vertices $V_N$ in the interior of its non-convex side~$S_N$, which contains at least one witness of~$b$.
Note that $V_C$ and $V_N$ do not contain vertices of the triangle $T_b$.
By convexity, for each pair of vertices from~$V_C \cup \{v,v+1,t\}$ the connecting edge is contained in~$S_C$.
By \Cref{lemma:edges_nonconvexside}, for each pair of vertices from~$V_N \cup \{v,v+1\}$ the connecting edge is contained in~$S_N$. In particular, those edges do not cross the edges of the triangle $T_b$.
We distinguish the following two cases depending on the position of~$s$.

\begin{description}
\item[Case 1: $s \in V_C$ ($s$ is in the interior of the convex side).]
By induction there exists a plane path~$P_1$ in the subdrawing induced by $V_C \cup \{v\}$ from $s$ to~$v$, which traverses all vertices in~$V_C \cup \{v\}$ and is contained in~$S_C$.
Similarly, by induction there exists a plane path~$P_2$ in the subdrawing induced by $V_N \cup \{v,v+1\}$ from $v$ to~$v+1$, which traverses all vertices in~$V_N \cup \{v,v+1\}$ and is contained in~$S_N$.
The concatenation of~$P_1$,~$P_2$, and the star edge~$\{v+1,t\}$ yields the desired plane Hamiltonian path from $s$ to~$t$; see \Cref{fig:stpath_convex_case2}.

\item[Case 2: $s \not\in V_C$ ($s$ is in the non-convex side).]
By symmetry, we assume without loss of generality that $s \not=v$.
By induction there exists a plane path $P_1$ in the subdrawing induced by $V_N \cup \{v,v+1\}$ from $s$ to~$v$, which traverses all vertices of $V_N \cup \{v,v+1\}$ and is contained in~$S_N$.
Note that this in particular includes the case $s = v+1$.
Similarly, by induction there exists a plane path $P_2$ in the subdrawing induced by $V_C \cup \{t,v\}$ from $v$ to~$t$, which traverses all vertices of $V_C \cup \{t,v\}$ and is contained in~$S_C$.
The concatenation of these two paths~$P_1$ and $P_2$ yields the desired plane Hamiltonian path from $s$ to~$t$; see \Cref{fig:stpath_convex_case1}. \qedhere
\end{description}
\end{proof}

\begin{figure}[htb]
\centering
\subcaptionbox{\label{fig:stpath_good}}[.34\textwidth]{\includegraphics[page=1]{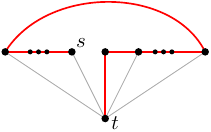}}
\subcaptionbox{\label{fig:stpath_convex_case2}}[.32\textwidth]{\includegraphics[page=2]{stpath_dcg.pdf}}
\subcaptionbox{\label{fig:stpath_convex_case1}}[.32\textwidth]{\includegraphics[page=4]{stpath_dcg.pdf}}
\caption{\subref{fig:stpath_good}~illustrates a plane Hamiltonian $s$-$t$-path with good edges. \subref{fig:stpath_convex_case2}~and \subref{fig:stpath_convex_case1}~illustrate the two cases to construct a plane Hamiltonian $s$-$t$-path if a bad edge exists.}
\label{fig:stpath_convex}
\end{figure}

\begin{theorem}
\label{thm:rafla_convex}
Every convex drawing of $K_n$ with $n \geq 3$ contains a plane Hamiltonian~cycle.
\end{theorem}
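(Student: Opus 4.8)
The plan is to bootstrap the cycle from the plane Hamiltonian \emph{path} theorem (\Cref{thm:stpath_convex}) together with the bad-edge decomposition used in its proof, rather than running a fresh induction. I fix a star vertex $v_\star=n$ with the non-star vertices labelled $1,\dots,n-1$ in rotation order, and distinguish whether a bad edge exists. Throughout I use that convexity is hereditary, so \Cref{thm:stpath_convex} may be applied to any induced subdrawing.

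If there is no bad edge, then every edge $\{v,v+1\}$ with $1\le v\le n-1$ is good, and I simply close up the rotation: the cycle $v_\star,1,2,\dots,n-1,v_\star$ uses the two star edges $\{v_\star,1\}$ and $\{n-1,v_\star\}$ together with the good edges $\{1,2\},\dots,\{n-2,n-1\}$. Good edges do not cross star edges by definition, star edges are pairwise non-crossing, and two good edges are non-crossing by \Cref{obs:HCplane}; hence this cycle is plane.

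If a bad edge $b=\{v,v+1\}$ exists, I reuse the partition from the proof of \Cref{thm:stpath_convex}: the triangle $T_b$ splits the remaining vertices into the interior $V_C$ of its convex side $S_C$ and the set $V_N$ of witnesses in the interior of its non-convex side $S_N$. The key idea is to produce two paths sharing exactly the endpoints $v$ and $v+1$. Applying \Cref{thm:stpath_convex} to the induced subdrawing on $V_N\cup\{v,v+1\}$ yields a plane $v$-$(v+1)$ path $P_N$, which lies in $S_N$ by \Cref{lemma:edges_nonconvexside}; applying it to the induced subdrawing on $V_C\cup\{v,v+1,v_\star\}$ yields a plane $v$-$(v+1)$ path $P_C$, which lies in $S_C$ by convexity. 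Since $P_N$ and $P_C$ both run from $v$ to $v+1$ and meet only in these two vertices, their union is a Hamiltonian cycle covering $V_N\cup V_C\cup\{v,v+1,v_\star\}$, i.e.\ all $n$ vertices.

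It remains to argue planarity, which I expect to be the only delicate point. Internal crossings are excluded because each of $P_N$ and $P_C$ is plane by \Cref{thm:stpath_convex}; crossings between the two paths are excluded because $P_N$ and $P_C$ lie in the two sides of $T_b$, whose interiors are disjoint, so edges of the two paths can meet only at the shared vertices $v$ and $v+1$. The main obstacle is thus the bookkeeping of the decomposition: ensuring that both subdrawings genuinely contain $v$ and $v+1$ as endpoints (so the paths glue into a cycle rather than merely a closed walk), that $V_C$ may be empty while $V_N$ is always non-empty (handled by the base cases of \Cref{thm:stpath_convex}), and that every edge produced really stays inside its prescribed side, for which \Cref{lemma:edges_nonconvexside} and the definition of convexity are exactly the tools needed.
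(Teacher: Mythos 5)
Your proof is correct and takes essentially the same approach as the paper: both handle the no-bad-edge case by closing up the rotation around $v_\star$ with two star edges, and otherwise split the drawing along the triangle $T_b$ of a bad edge, apply \Cref{thm:stpath_convex} to the two sides, and use \Cref{lemma:edges_nonconvexside} plus convexity to keep the pieces in disjoint regions. The only cosmetic difference is that the paper builds a single Hamiltonian path from $v+1$ to $v_\star$ (Case~2 of \Cref{thm:stpath_convex}) and closes it with the star edge $\{v_\star,v+1\}$, which is an edge of $T_b$, whereas you glue two internally disjoint $v$-to-$(v+1)$ paths with $v_\star$ absorbed into the convex-side path.
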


\begin{proof}
We choose an arbitrary vertex $t$ as the star vertex.
If there is no bad edge, we can easily construct a plane Hamiltonian cycle by traversing all but one good edges according to the rotation of $t$ and close that path with two consecutive star edges.

Otherwise, we pick a bad edge $b = \{v,v+1\}$ and construct a Hamiltonian path from $s = v+1$ to $t$ as in Case~2 of the proof of \Cref{thm:stpath_convex}.
The path consists of two parts, $P_1$~and~$P_2$, which exist by applying \Cref{thm:stpath_convex} to the vertices in the non-convex side and the convex side of~$T_b$, respectively.
Path $P_1$ connects $s$ to $v$ and is contained in the non-convex side~$S_N$, and $P_2$ connects $v$ to $t$ and is contained in the convex side~$S_C$.
Since none of the path edges crosses the star edge $\{t,s\}$, closing the path with the edge $\{t,s\}$ yields the desired plane Hamiltonian~cycle; see \Cref{fig:cycle_convex}.
\end{proof}

\begin{figure}[htb]
\centering
\includegraphics[page=3]{stpath_dcg.pdf}
\caption{Constructing a plane Hamiltonian cycle in a convex drawing.}
\label{fig:cycle_convex}
\end{figure}

\section{Maximal plane subdrawings}
\label{sec:maxplane}

In the previous section we have seen that convex drawings of $K_n$ contain a plane Hamiltonian cycle.
Moreover, for each pair of vertices there is a plane Hamiltonian path connecting them.
In this section we investigate how many edges we can add to a Hamiltonian cycle in a convex drawing while staying plane.
In other words, we ask for maximal plane Hamiltonian substructures.
As a first step, we omit the Hamiltonicity and show that all maximal plane subdrawings of a convex drawing have the same number of edges.
This is very different from the general case of simple drawings where it is \NP-complete to decide whether there exists a plane subdrawing with a given number of edges~\cite{GarciaTejelPilz2021}.

\begin{theorem}
\label{thm:convex-maximal-is-maximum}
Every maximal plane subdrawing of a convex drawing $\calD$ of~$K_n$ is maximum plane.
\end{theorem}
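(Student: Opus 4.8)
The plan is to characterize maximal plane subdrawings by their face structure and then count edges via Euler's formula. I work on the sphere throughout, so that nothing depends on a choice of outer face, consistent with the convention in \Cref{sec:paths_and_cycles}. The target is the structural claim that \emph{every face of a maximal plane subdrawing $H$ is an empty convex polygon}: its boundary is a simple cycle $C_f=c_1\cdots c_k$, the interior of $f$ is vertex-free, and no further $\calD$-edge between vertices of $C_f$ can be drawn inside $f$. As a preliminary I would argue that maximality forces $H$ to be connected and its faces to be bounded by simple cycles: if two components shared a face, or a cut vertex split a face, a convexity argument in the spirit of \Cref{lemma:edges_nonconvexside} shows that a suitable $\calD$-edge can be routed inside that face without crossing any edge of $H$, contradicting maximality.

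The heart of the local analysis is to show that a face $f$ bounded by a cycle of length $k\ge 4$ must be convex in the above sense. I would examine three consecutive boundary vertices $c_{i-1},c_i,c_{i+1}$ and the triangle they span. Since a simple drawing of $K_4$ has at most one crossing, the ear diagonal $\{c_{i-1},c_{i+1}\}$ cannot cross the two boundary edges at $c_i$ (adjacent edges do not cross), so it is routed either inside $f$ or on the far side. The convex side of the triangle, together with the emptiness of $f$, should pin down on which side the diagonal lies and force that, whenever it lies on the $f$-side, it can be added without a crossing. This is exactly the kind of bad-edge-and-witness routing argument used in \Cref{obs:ConvexSide} and \Cref{lemma:edges_nonconvexside}, which I expect to reuse here.

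Once all faces are convex empty polygons, Euler's formula $n-|E(H)|+F=2$ together with $\sum_f\deg(f)=2|E(H)|$ yields
\[
  |E(H)| = 3n-6-\sum_f\bigl(\deg(f)-3\bigr).
\]
Hence the size of every maximal plane subdrawing is determined by the \emph{total excess} $D(H):=\sum_f(\deg(f)-3)$ of its non-triangular faces. The claim therefore reduces to proving that $D(H)$ is the same for all maximal plane subdrawings, that is, an invariant of $\calD$ alone.

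Establishing the invariance of $D(H)$ is the main obstacle. In the geometric case it is immediate, since the only non-triangular face is bounded by the convex hull and its size depends only on the point set; on the sphere there is no outer face and no hull, so a purely combinatorial substitute is needed. My first approach would be to show that the non-triangular convex faces are canonical: the empty convex polygons of $\calD$ that admit no interior chord form a fixed family of cells, and every maximal plane subdrawing restricts to a triangulation of each cell while leaving the cells themselves intact; since all triangulations of a fixed convex cell of size $k$ use the same number of chords, $D(H)$ would then be independent of $H$. A more robust alternative is an exchange argument: show that any two maximal plane subdrawings are linked by a sequence of diagonal flips inside convex quadrilateral faces, each preserving the edge count and neither creating nor destroying a non-triangular face. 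Connectivity of this flip graph would give the theorem and would also justify the greedy computation of maximum plane subdrawings announced just after the statement.
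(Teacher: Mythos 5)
Your Euler-formula reduction is sound and is essentially the paper's own counting step: once one knows that the faces of size at least four are \emph{identical} in every maximal plane subdrawing, the number of edges is forced. The problem is that this invariance, which you yourself flag as ``the main obstacle,'' is the entire content of the theorem, and you leave it unproven: both of your suggested routes (canonical cells, flip-graph connectivity) are restatements of what must be shown, not arguments. Moreover, your working notion of a face being an ``empty convex polygon'' (vertex-free interior, no chord of the boundary cycle drawable inside) is too weak to yield the invariance. The danger is not chords but edges of $\calD$ whose \emph{both} endpoints lie outside the face $F$ and which cross the boundary cycle and pass through $F$. Such an edge cannot belong to the given maximal plane subdrawing $H_1$ (it crosses edges of $H_1$), but a priori it could belong to a different maximal plane subdrawing $H_2$; then the boundary cycle of $F$ is not even contained in $H_2$, and $F$ is not one of its faces, so your quantity $D(H)$ need not agree between $H_1$ and $H_2$. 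Your local ear analysis at three consecutive boundary vertices only concerns diagonals between boundary vertices and never addresses these transversal edges, nor does it compare two different maximal subdrawings.

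The paper closes exactly this gap, and this is where convexity of $\calD$ enters decisively. For an $r$-face $F$ with $r \geq 4$ bounded by the cycle $C_r$: maximality gives that no chord of $C_r$ lies inside $F$; by a lemma of Garc\'ia, Pilz, and Tejel~\cite[Lemma~3.6]{GarciaTejelPilz2021} every chord then lies entirely in the complement of $F$, so the vertices of $C_r$ induce in $\calD$ a $K_r$ isomorphic to the convex geometric drawing; since the subdrawing is spanning (2-connectivity and spanning come from \cite[Theorem~3.1]{GarciaTejelPilz2021}, not from an ad hoc convexity argument), $F$ contains no vertex; and then a lemma of Arroyo et al.~\cite[Lemma~3.5]{ArroyoMRS2017_convex} upgrades ``no chord enters $F$'' to ``no edge of $\calD$ whatsoever enters $F$.'' Hence $C_r$ is completely uncrossed in $\calD$, so $C_r$ and the face $F$ belong to \emph{every} maximal plane subdrawing --- which is precisely the canonical-cell statement your first approach hopes for. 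Without this step (or a substitute for it) your argument does not close; with it, your counting finishes the proof just as in the paper.
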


We use the following three results for our proof.
Garc{\'i}a, Pilz, and Tejel~\cite{GarciaTejelPilz2021} studied properties of maximal plane subdrawings of general simple drawings.

\begin{lemma}[{Garc{\'i}a, Pilz, Tejel~\cite[Theorem~3.1]{GarciaTejelPilz2021}}]
\label{lem:maxplane_spanning2conn}
In every simple drawing $\calD$ of $K_n$, every maximal plane subdrawing of $\calD$ is spanning and 2-connected.
\end{lemma}

For a plane cycle $C = (v_1, \ldots, v_k)$ in a simple drawing $\calD$, a \emph{chord} of $C$ is an edge of $\calD$ connecting two non-consecutive vertices $v_i$ and $v_j$ of~$C$.
A chord of $C$ can either cross some edges of $C$ or it lies entirely in one of the two sides of~$C$.

\begin{lemma}[{Garc{\'i}a, Pilz, Tejel~\cite[Lemma~3.6]{GarciaTejelPilz2021}}]
\label{lem:chordsincomplement}
Let $\calD$ be a simple drawing of $K_n$ and $C = (v_1, \ldots, v_k)$, $k \geq 3$, a plane cycle in $\calD$ with sides $S_1$ and $S_2$.
If no chord of $C$ lies entirely in $S_1$, then all chords of $C$ lie entirely in $S_2$. 
\end{lemma}

A subdrawing induced by the vertices $v_1, \ldots, v_k$ of a plane cycle~$C$ in a simple drawing of~$K_n$ such that all chords of~$C$ lie on the same side of~$C$ is isomorphic to a geometric drawing of $k$ points in convex position, denoted by~$\calC_k$.
Moreover, we call the side of~$C$ containing all chords the \emph{convex side} of~$\calC_k$.
In the introductory paper of Arroyo et al.~\cite{ArroyoMRS2017_convex} for convex drawings, they studied properties of vertices of the convex side of $\calC_k$ and the behaviour of edges between two such vertices (which includes the vertices of~$\calC_k$).

\begin{lemma}[{Arroyo, McQuillan, Richter, Salazar~\cite[Lemma~3.5]{ArroyoMRS2017_convex}}]
\label{lem:convex}
Let $\calD$ be a convex drawing of $K_n$ with a subdrawing $\calC_k$, $k \geq 4$.
Then every edge of $\calD$ between two vertices of the convex side of $\calC_k$ stays in the convex side of~$\calC_k$.
\end{lemma}

We are now ready to prove the theorem.

\begin{proof}[Proof of \Cref{thm:convex-maximal-is-maximum}]
Let $\calD'$ be an arbitrary maximal plane subdrawing of~$\calD$.
Then $\calD'$ is $2$-connected and spanning by \Cref{lem:maxplane_spanning2conn}.
Clearly, in a $2$-connected plane drawing, every face is bounded by a cycle.
We show that, for every $k \geq 4$, the $k$-faces are identical in every maximal plane subdrawing of $\calD$, where a face is a \emph{$k$-face} if it is bounded by exactly $k$ edges.

Let $F$ be a $k$-face in $\calD'$ with $k \geq 4$ and denote its bounding cycle by~$C$, i.e., $F$ is one of the two sides of~$C$.
Since $\calD'$ is maximal plane, no chord of $C$ lies entirely in~$F$.
Consequently, \Cref{lem:chordsincomplement} implies that all chords of~$C$ lie entirely on the other side $S_2$ of~$C$.
In particular, the vertices of $C$ induce a subdrawing~$\calC_k$ of~$\calD$ isomorphic to the geometric drawing of $k$ points in convex position.
Moreover, since~$\calD'$ is spanning and $F$ is a face of~$\calD'$, no vertex of~$\calD$ lies inside of~$F$.
Since $\calD$ is convex, we use \Cref{lem:convex} to conclude that all edges of~$\calD$ are in the convex side $S_2$ of~$\calC_k$.
Thus, $C$ is completely uncrossed in~$\calD$, and $C$ and $F$ are part of every maximal plane subdrawing of~$\calD$.

This fixes, for every $k \geq 4$ and every maximal plane subdrawing $\calD'$ of~$\calD$, the number of $k$-faces in $\calD'$.
Hence, Euler's formula implies $f_3 = e + c$, where $f_3$ is the number of 3-faces, $e$~is the number of edges, and $c$ is a constant depending on the number of vertices and faces of size at least four in~$\calD'$.
On the other hand, double counting the number of edges in $\calD'$ via incident faces gives $2e = 3f_3 + c'$ for some constant $c'$ depending on the number of faces of size at least four.
Since there is a unique solution to these two equations, the number of 3-faces and, especially, the number of edges in every maximal plane subdrawing is fixed.
Thus, every maximal plane subdrawing of~$\calD$ is maximum~plane.
\end{proof}

Recall the specific structure of maximal plane subdrawings from the proof.
In some sense it is similar to triangulations of a pointgon (see Aichholzer, Rote, Speckmann, and Streinu~\cite{arss-2003-zppt} for a formal definition) with holes, i.e., triangulations of a polygon with holes and additional points inside.
Further note that a drawing of $K_n$ is f-convex if and only if it is pseudolinear~\cite{ArroyoMRS2017_pseudolines}.
Thus, every f-convex drawing has a convex hull and every maximal plane subdrawing of an f-convex drawing is a triangulation of that convex hull.

In the setting of simple drawings and in accordance with Garc{\'i}a, Pilz, and Tejel~\cite{GarciaTejelPilz2021}, we call a plane drawing on $n$ vertices with $3n-6$ edges a \emph{triangulation}.
Equivalently, a triangulation is a plane drawing where every face is a triangle.
For h-convex drawings we can prove the existence of such a triangulation.

\begin{theorem}
\label{thm:h-convex-maximal-plane}
Let $\calD$ be an h-convex drawing of $K_n$.
Then in every maximal plane subdrawing of $\calD$ all but at most one faces are triangles.
Moreover, if $\calD$ is not f-convex, then every maximal plane subdrawing of $\calD$ is a triangulation.
\end{theorem}

To prove the latter part, we again use a lemma from Arroyo et al.~\cite{ArroyoMRS2017_convex}.

\begin{lemma}[{Arroyo, McQuillan, Richter, Salazar~\cite[Lemma~4.7]{ArroyoMRS2017_convex}}]
\label{lem:h-convex}
Let $\calD$ be an h-convex drawing of $K_n$ with a subdrawing $\calC_k$, $k \geq 4$, such that $\calD$ lies entirely in the convex side of~$\calC_k$.
Then $\calD$ is f-convex.
\end{lemma}

\begin{proof}[Proof of \Cref{thm:h-convex-maximal-plane}]
Let $\calD'$ be an arbitrary maximal plane subdrawing of~$\calD$.
Assume towards a contradiction that there are at least two faces $F_1$ and $F_2$ in $\calD'$ with bounding cycles $C_{k_1}$ and $C_{k_2}$, respectively, of lengths $k_1, k_2 \geq 4$.
As shown in the proof of \Cref{thm:convex-maximal-is-maximum}, $C_{k_1}$~and $C_{k_2}$ are completely uncrossed in $\calD$ and they induce subdrawings $\calC_{k_1}$ and $\calC_{k_2}$ isomorphic to geometric drawings in convex position, which both contain $\calD$ in their convex sides.
Hence, $\calC_{k_2}$ lies in the convex side of $\calC_{k_1}$ and vice versa.
Moreover, every triangle induced by three vertices of $\calC_{k_i}$, $i \in \{ 1, 2 \}$, has a unique convex side, which is contained in the convex side of~$\calC_{k_i}$.
Since $C_{k_2}$ is completely uncrossed, this implies that there is a triangle~$T_1$ induced by three vertices of $\calC_{k_1}$ such that $C_{k_2}$ and consequently $\calC_{k_2}$ lies completely in the unique convex side of~$T_1$.
Similarly, there is a triangle $T_2$ induced by three vertices of $\calC_{k_2}$ such that $\calC_{k_1}$ lies completely in the unique convex side of~$T_2$.
But then $T_1$ and $T_2$ form a contradiction to h-convexity; this shows the first part.

For the second part, assume that there is one face $F$ in $\calD'$ that is not a triangle.
Again, the vertices on the boundary of $F$ induce a $\calC_k$, $k \geq 4$, such that $\calD$ is contained in the convex side of~$\calC_k$.
Hence, by \Cref{lem:h-convex}, $\mathcal{D}$ is f-convex.
Consequently, every maximal plane subdrawing of an h-convex but not f-convex drawing is a triangulation.
\end{proof}

Arroyo, Richter, and Sunohara~\cite{ArroyoRS2020} suggested that all crossing minimizing drawings of~$K_n$ are h-convex.
For the following conjecture, we are inspired by two additional facts.
First, all crossing minimizing geometric and pseudolinear drawings of $K_n$ have a triangular convex hull~\cite{agor-2007-nlbnkercn,blprs-2007-cheopdt}.
Second, no pseudolinear drawing of $K_n$, for $n$ large enough, is crossing minimizing~\cite{acfls-2012-kechlgdkn}.

\begin{conjecture}
Every maximal plane subdrawing of every crossing minimizing drawing of $K_n$, $n\geq 3$, is a triangulation.
\end{conjecture}

We come back to convex drawings.
Garc{\'i}a, Pilz, and Tejel~\cite[Corollary~3.4]{GarciaTejelPilz2021} showed that every simple drawing of $K_n$ has a plane subdrawing with at least $2n-3$ edges.
Together with \Cref{thm:convex-maximal-is-maximum} this shows that all maximal plane subdrawings of a convex drawing of $K_n$ have at least $2n-3$ edges.
In general, this bound is best possible as every triangulation of $n$ points in convex position has exactly $2n-3$~edges.

\begin{corollary}
\label{cor:convex-plane-edge-count}
Every maximal plane subdrawing of a convex drawing of~$K_n$ has at least $2n-3$ edges.
\end{corollary}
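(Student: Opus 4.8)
The plan is to derive \Cref{cor:convex-plane-edge-count} as an immediate consequence of the two results already established: \Cref{thm:convex-maximal-is-maximum}, which asserts that every maximal plane subdrawing of a convex drawing is in fact maximum plane, together with the general lower bound of Garc\'ia, Pilz, and Tejel~\cite[Corollary~3.4]{GarciaTejelPilz2021} guaranteeing a plane subdrawing with at least $2n-3$ edges in \emph{any} simple drawing of $K_n$. The strategy is purely a matter of chaining these together; no new combinatorial argument is needed.

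First I would invoke \cite[Corollary~3.4]{GarciaTejelPilz2021} to obtain \emph{some} plane subdrawing $\calD_0$ of the given convex drawing $\calD$ with at least $2n-3$ edges. Since a maximum plane subdrawing has the largest number of edges among all plane subdrawings, the maximum plane subdrawing of $\calD$ has at least $2n-3$ edges as well. Next I would take an arbitrary maximal plane subdrawing $\calD'$ of $\calD$; by \Cref{thm:convex-maximal-is-maximum} it is maximum plane, hence it has exactly as many edges as the maximum plane subdrawing, which is at least $2n-3$. Because $\calD'$ was arbitrary, every maximal plane subdrawing has at least $2n-3$ edges, establishing the bound.

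The only subtlety worth making explicit is the distinction between \emph{maximal} and \emph{maximum}: a priori a maximal plane subdrawing (one admitting no further edge without a crossing) could have fewer edges than the largest plane subdrawing, and indeed in general simple drawings it can. The whole force of the corollary rests on \Cref{thm:convex-maximal-is-maximum} collapsing this gap in the convex setting, so I would state the chain of inequalities in exactly that order to make the logical dependence transparent. I do not anticipate any genuine obstacle here, since both ingredients are already available; the ``hardest'' part is merely phrasing the one-line deduction cleanly.

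For completeness I would also recall the tightness remark: every triangulation of $n$ points in convex position is a plane drawing on $2n-3$ edges in which no edge can be added, so the bound $2n-3$ cannot be improved in general. This is the standard sharpness witness and confirms that the corollary is best possible.
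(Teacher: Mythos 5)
Your proof is correct and follows exactly the paper's own argument: it combines the lower bound of Garc\'ia, Pilz, and Tejel~\cite[Corollary~3.4]{GarciaTejelPilz2021} with \Cref{thm:convex-maximal-is-maximum} to conclude that every maximal plane subdrawing is maximum and hence has at least $2n-3$ edges, and even includes the same tightness remark about triangulations of points in convex position.
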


In contrast, for general simple drawings it is known that all maximal plane subdrawings have at least $\frac{3n}{2}$ edges, which is tight as shown by Garc{\'i}a, Pilz, and Tejel~\cite[Proposition~3.3]{GarciaTejelPilz2021}.
They give a construction of a (non-convex) simple drawing $\calD$ of $K_{n}$ having a maximal plane subdrawing $\calD'$ with $\left\lceil\frac{3n}{2}\right\rceil$ edges.
Actually, $\calD'$ has a Hamiltonian cycle that crosses all edges of $\calD \setminus \calD'$.
This shows that there are plane Hamiltonian cycles in simple drawings that cannot be extended by more than $\left\lceil\frac{n}{2}\right\rceil$ edges while staying plane.
We repeat their construction for completeness:
Place $n$ vertices $v_1, \ldots, v_n$ clockwise at the vertices of a regular $n$-gon such that $v_1$ is the leftmost vertex and the line between $v_2$ and $v_n$ is vertical.
Draw the cycle $C = v_1, v_2, \ldots, v_n$ straight-line and draw all chords of $C$ straight-line inside of $C$ except for those that would not cross any vertical chord.
Draw them and the edge between $v_1$ and $v_{\frac{n}{2}+1}$ outside of~$C$; see \Cref{fig:planecyclewith32n} for an illustration with $n=10$ vertices.
The Hamiltonian cycle in question is then $C_H = v_1, v_2, v_{n}, v_{n-1}, v_3, \ldots, v_{\frac{n}{2}+1}$.
For odd $n$, both edges from $v_1$ to $v_{\left\lfloor\frac{n}{2}\right\rfloor+1}$ and $v_{\left\lceil\frac{n}{2}\right\rceil+1}$ are drawn outside of $C$ and exactly one of them is part of~$C_H$.

\begin{figure}[htb]
\centering
\includegraphics{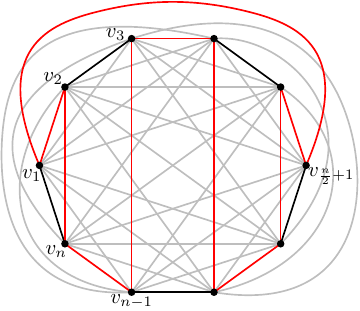}
\caption{A drawing of $K_n$ with a Hamiltonian cycle (red) that crosses all but $\left\lceil\frac{n}{2}\right\rceil$ many other edges (black).}
\label{fig:planecyclewith32n}
\end{figure}

For convex drawings, combining \Cref{thm:rafla_convex,thm:convex-maximal-is-maximum}, we obtain that every plane Hamiltonian cycle can be extended by $n-3$ edges in a greedy fashion to obtain a plane Hamiltonian subdrawing with $2n-3$ edges.

\begin{corollary}
\label{cor:extend_to_2n_minus_3}
Every plane Hamiltonian cycle in a convex drawing of~$K_n$ can be greedily extended to a plane Hamiltonian subdrawing on $2n-3$ edges.
\end{corollary}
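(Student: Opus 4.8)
\textbf{Proof proposal for \Cref{cor:extend_to_2n_minus_3}.}
The plan is to combine the existence result with the maximality-equals-maximality theorem and apply a greedy argument. First, given a convex drawing of $K_n$ together with a fixed plane Hamiltonian cycle~$H$, I would regard $H$ as a plane subdrawing with exactly $n$ edges. The key observation is that $H$, being a single cycle on all $n$ vertices, is a plane subdrawing to which I can try adding edges of $\calD$ one at a time, each time checking whether the addition preserves planarity. I would repeat this greedy extension until no further edge can be added without creating a crossing; by definition the resulting subdrawing $\calD'$ is \emph{maximal plane} and still contains $H$.

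Next I would invoke \Cref{thm:convex-maximal-is-maximum}, which guarantees that every maximal plane subdrawing of a convex drawing is in fact maximum plane, so all maximal plane subdrawings share the same edge count. Combined with \Cref{cor:convex-plane-edge-count}, which establishes that this common count is at least $2n-3$, the subdrawing $\calD'$ has at least $2n-3$ edges. Since $\calD'$ contains the original Hamiltonian cycle~$H$ (we only \emph{added} edges and never removed any), it is a plane Hamiltonian subdrawing. To obtain a subdrawing with exactly $2n-3$ edges as stated, I would simply stop the greedy process as soon as $2n-3$ edges are present, or alternatively note that a maximal plane subdrawing of $K_n$ in a convex setting triangulates into a structure with precisely $2n-3$ edges; either way the bound $2n-3$ is attained while Hamiltonicity is retained.

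The one subtlety I expect to handle carefully is the claim that the greedy extension can always reach the full maximal count \emph{starting from $H$}, rather than getting stuck at a smaller maximal configuration containing $H$. This is exactly where \Cref{thm:convex-maximal-is-maximum} does the heavy lifting: because \emph{every} maximal plane subdrawing has the same (maximum) number of edges, there is no risk of the greedy process terminating at a ``small'' maximal subdrawing. In the general simple-drawing setting this step would fail, as the discussion preceding the corollary notes that there exist plane Hamiltonian cycles extendable by only $\tfrac{n}{2}$ edges; the convexity hypothesis is precisely what rules out such bad local maxima. Thus the main (and only genuine) obstacle is already dispatched by the maximal-is-maximum theorem, and the remainder is a short greedy-and-count argument.
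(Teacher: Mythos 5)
Your proposal is correct and follows essentially the same route as the paper: greedily extend the given cycle to a maximal plane subdrawing, invoke \Cref{thm:convex-maximal-is-maximum} together with \Cref{cor:convex-plane-edge-count} to see that this subdrawing has at least $2n-3$ edges, and stop (or discard surplus edges) once $2n-3$ edges containing the cycle are present. One small caveat: your alternative justification that a maximal plane subdrawing of a convex drawing has \emph{precisely} $2n-3$ edges is false in general (it can have up to $3n-6$ edges, e.g.\ in geometric drawings of point sets in general position), but this aside is unnecessary since your primary stop-early argument already suffices.
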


\pagebreak

\section{Stars and cycles}
\label{sec:strengthenings}

As we have seen in the last section, in convex drawings we can extend every plane Hamiltonian cycle to a plane subdrawing with $2n-3$ edges.
In this section we investigate the structure of such plane Hamiltonian subdrawings.
In particular, we show that for every vertex $v_\star$ there exists a plane Hamiltonian subdrawing with $2n-3$ edges that contains all edges incident to~$v_\star$.
We defer the proof to~\Cref{sec:convex_HC_proof}.

\begin{theorem}
\label{theorem:convex_HC}
For every convex drawing $\calD$ of $K_n$ with $n \geq 3$ and every vertex $v_\star$ in~$\calD$, there exists a plane Hamiltonian cycle that does not cross any edge incident to~$v_\star$.
Such a Hamiltonian cycle can be computed in $\bigO(n^2)$ time.
\end{theorem}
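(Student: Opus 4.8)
The plan is to construct a plane Hamiltonian cycle every edge of which is non-star-crossing; its union with the spanning star at $v_\star$ is then a plane subdrawing, which by \cref{thm:convex-maximal-is-maximum} and \cref{cor:convex-plane-edge-count} relates to the Hamiltonian subdrawing on $2n-3$ edges discussed above. First I would reduce the task: a Hamiltonian cycle through $v_\star$ uses exactly two star edges, so removing $v_\star$ leaves a Hamiltonian path $P$ on the $n-1$ remaining vertices. By \cref{obs:HCplane}, two non-star-crossing edges cross if and only if their endpoints interleave in the rotation of $v_\star$. Hence it suffices to produce a Hamiltonian path $P$ on $\{1,\dots,n-1\}$ whose edges are all non-star-crossing and pairwise non-interleaving, and to reconnect its two endpoints through $v_\star$: every edge of $P$ then avoids all star edges, the two reattaching star edges avoid each other, and no two edges of the resulting cycle cross.

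Then I would proceed by induction on $n$, exactly along the split used in \cref{thm:stpath_convex,thm:rafla_convex}, but now maintaining the stronger invariant that no constructed edge is star-crossing. If $\calD$ has no bad edge, the path visiting $1,2,\dots,n-1$ in rotation order consists of good, hence non-star-crossing, consecutive edges, which are pairwise non-interleaving; reconnecting through $v_\star$ finishes this case. Otherwise I pick a bad edge $b=\{v,v+1\}$ and split along $T_b$ into its convex side $S_C$ with interior vertices $V_C$ and its non-convex side $S_N$ with interior vertices $V_N$ (the witnesses, by \cref{obs:ConvexSide}). The non-convex side is the clean half: by \cref{lemma:edges_nonconvexside} every edge spanned by $V_N\cup\{v,v+1\}$ stays inside $S_N$, and every star edge to a vertex of $V_C$ stays inside $S_C$; consequently an edge of the subdrawing on $V_N\cup\{v,v+1,v_\star\}$ is non-star-crossing in $\calD$ precisely when it is non-star-crossing in that subdrawing. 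Induction therefore yields a detour from $v$ to $v+1$ through all of $V_N$ that crosses no star edge and remains in $S_N$.

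The main obstacle is the convex side, where the induction does not go through verbatim. The difficulty is that the star edges to the witnesses do not stay in $S_N$: each $\{w,v_\star\}$ with $w\in V_N$ leaves $v_\star$ into $S_C$, crosses $b$, and only afterwards reaches $w$. The initial portions of these edges therefore sweep across $S_C$ as radial arcs from $v_\star$ to points of $b$, and an edge of a convex-side path may well cross such an arc even though it avoids every star edge of the subdrawing on $V_C\cup\{v,v+1,v_\star\}$; a cycle obtained by naively recursing on that subdrawing is only guaranteed to avoid its own star edges, not these witness arcs. To overcome this I would strengthen the inductive statement so that each instance carries, in addition to the star vertex, a laminar family of forbidden radial arcs emanating from $v_\star$ and ending on the boundary of the current region, and demand a Hamiltonian path avoiding the star edges and all forbidden arcs. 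The witness portions are exactly such arcs, and the crux of the whole proof is to show that this enriched statement is still closed under the $T_b$-split, i.e.\ that the forbidden arcs, together with any new witness arcs, restrict to laminar radial families in the two subinstances; here I would again invoke the characterization of convex drawings by their $5$-point subdrawings, applied to quintuples consisting of $v$, $v+1$, $v_\star$, a witness, and an endpoint of a candidate edge, to control which arcs a given edge can meet.

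Finally I would concatenate the convex-side path from $v$ to $v+1$ through $V_C$ with the clean $S_N$-detour through $V_N$; since the two parts lie in the disjoint interiors of $S_C$ and $S_N$, no edge of one crosses an edge of the other, and by \cref{obs:HCplane} neither part self-crosses, so the concatenation is the desired non-interleaving, non-star-crossing Hamiltonian path on $\{1,\dots,n-1\}$, which closes through $v_\star$ to a plane Hamiltonian cycle. For the running time I would note that one split costs $\bigO(n)$: the rotation of $v_\star$, the witnesses of $b$, and the partition into $V_C$ and $V_N$ are all obtained from the pairwise crossing information in linear time. As each recursive call permanently fixes the edge $b$ and distributes the vertices among the two sides, the recursion visits $\bigO(n)$ instances, each doing $\bigO(n)$ work, for $\bigO(n^2)$ in total.
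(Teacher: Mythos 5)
Your proposal correctly isolates the real difficulty of this theorem: the initial portions of the witness star edges $\{w,v_\star\}$ sweep through the convex side $S_C$, so recursing on the subdrawing induced by $V_C\cup\{v,v+1,v_\star\}$ gives no control over crossings with these arcs. However, you do not overcome this difficulty; you delegate it entirely to an unproven claim that you yourself call ``the crux of the whole proof,'' namely that a strengthened inductive statement carrying a laminar family of forbidden radial arcs is closed under the $T_b$-split. That claim is not even precisely formulated (which endpoints may be prescribed; how inherited arcs interact with the witness arcs of the new bad edges arising inside a subinstance; why a Hamiltonian path avoiding an arbitrary laminar family of radial arcs should exist at all, which is false for abstract arc families), and invoking the five-point characterization of convexity is a plan, not an argument. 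The paper does not make any recursion of this kind work: at exactly this point it switches to a global, non-recursive analysis, proving that any two bad edges are nested and that all witnesses occupy a single block of the rotation (\cref{lem:twobadedges1}, yielding sidedness and nestedness), then extracting explicit non-star-crossing edges $\{l(r),r\}$ and $\{l(r)+1,r\}$ between the left and right blocks (\cref{lem:leftrightmostwitness}, \cref{cor:edgeslefttoright}) whose attachment points $l(r)$ behave monotonically (\cref{lem:l(vi)}), and finally building the cycle as an explicit zigzag. To prove your crux you would essentially have to rediscover this structure, so the proposal is incomplete precisely where the work lies.

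There are two further problems. First, the non-convex side is not handled by plain induction either: applying the theorem to $V_N\cup\{v,v+1,v_\star\}$ yields a Hamiltonian cycle whose two star edges attach at vertices determined by the construction, not a detour from $v$ to $v+1$; the induction hypothesis does not allow prescribing endpoints, and indeed prescribing both endpoints of a non-star-crossing Hamiltonian path is impossible in general (for points in convex position with $v_\star$ on the hull, the only non-star-crossing edges among $\{1,\dots,n-1\}$ are the consecutive ones, so the only such path is the rotation-order path). The detour you want does exist, but for a structural reason you would still need to prove: in that subinstance $b$ is the only bad edge (any other bad edge would either share a witness with $b$, contradicting \cref{lem:twobadedges1}, or cross $T_b$, contradicting \cref{lemma:edges_nonconvexside}), so the rotation-order path from $v+1$ around to $v$ works as in \cref{lem:onebadedgePHC}. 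Second, your final assembly is inconsistent: concatenating a $v$-to-$(v+1)$ path through $V_C$ with a $v$-to-$(v+1)$ detour through $V_N$ produces a closed cycle on $\{1,\dots,n-1\}$ avoiding $v_\star$, not a Hamiltonian path that can be ``closed through $v_\star$.'' This particular flaw is repairable (delete one cycle edge and reroute through $v_\star$, using that star edges cross neither each other nor the non-star-crossing cycle edges), but the missing crux above is not a matter of repair.
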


This theorem provides a lot of structure for plane Hamiltonian subdrawings, which makes it possible to investigate further properties.
For example, already in geometric drawings there are edges that are not contained in any plane Hamiltonian cycle (see one of the red edges in \Cref{fig:noprescribededges_a}).
However, replacing this property with Hamiltonian paths, we show that the plane Hamiltonian paths of a given convex drawing cover all~edges.

\begin{corollary}
\label{thm:HP_prescribed_edges}
For every convex drawing $\calD$ of $K_n$ and every edge $e$ in $\calD$, there exists a plane Hamiltonian path containing~$e$.
\end{corollary}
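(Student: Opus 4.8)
The plan is to reduce the statement to \cref{theorem:convex_HC}. Write $e = \{a,b\}$ and designate one of its endpoints, say $a$, as the star vertex~$v_\star$. Applying \cref{theorem:convex_HC} with this choice yields a plane Hamiltonian cycle~$H$ that crosses no edge incident to~$a$; in particular $H$ does not cross~$e$. Hence the edges of the (plane) cycle~$H$ together with~$e$ are pairwise non-crossing, so $H \cup \{e\}$ is a plane subdrawing of~$\calD$. It then remains to extract from $H \cup \{e\}$ a Hamiltonian path that uses the edge~$e$; since every edge of such a path already lies in the plane graph $H \cup \{e\}$, that path is automatically plane.

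The extraction is purely combinatorial. First I would treat the case $e \notin H$: the chord~$e$ splits the cycle~$H$ into two arcs $A_1$ and $A_2$ from $a$ to~$b$, each with at least one interior vertex since $a$ and $b$ are non-adjacent in~$H$. Writing $A_1 : a - u_1 - \dots - u_k - b$ and $A_2 : a - w_1 - \dots - w_m - b$, the path
\[
u_k - u_{k-1} - \dots - u_1 - a - b - w_m - w_{m-1} - \dots - w_1
\]
visits every vertex exactly once, uses the edge $e = \{a,b\}$ in its middle, and otherwise uses only edges of $A_1 \cup A_2 \subseteq H$ (it drops exactly the two edges $\{u_k,b\}$ and $\{a,w_1\}$ of~$H$). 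Thus it is the desired plane Hamiltonian path. In the remaining case $e \in H$, the cycle~$H$ already contains~$e$, and deleting any edge of~$H$ other than~$e$ (which exists since $n \ge 3$) leaves a plane Hamiltonian path through~$e$.

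The main obstacle here is conceptual rather than technical: the decisive move is to anchor the star vertex of \cref{theorem:convex_HC} at an endpoint of the prescribed edge, which is precisely what guarantees that~$e$ can be added to the resulting Hamiltonian cycle without creating a crossing. Once this is recognised, the only things to verify are the elementary rerouting above and the observation that the path inherits planarity from $H \cup \{e\}$, being a subgraph of it. Note that no control over the individual edges of~$H$ is needed, which is why \cref{theorem:convex_HC}, rather than merely \cref{thm:stpath_convex}, is the natural tool: the former supplies a Hamiltonian cycle \emph{compatible} with all edges at~$v_\star$, and compatibility with the single edge~$e$ is exactly what the rerouting exploits.
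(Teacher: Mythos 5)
Your proposal is correct and takes essentially the same route as the paper's proof: both anchor the star vertex of \cref{theorem:convex_HC} at an endpoint of~$e$, and then reroute the resulting plane Hamiltonian cycle through~$e$, with planarity inherited because every edge of the path lies in the plane subdrawing formed by the cycle together with the star edges. Your arc-splitting path $u_k-\dots-u_1-a-b-w_m-\dots-w_1$ is exactly the paper's path $x_{i-1},\dots,x_1,u,v,x_{i+1},\dots,x_{n-1}$, and your treatment of the case $e \in H$ matches the paper's case $i=1$ or $i=n-1$.
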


\begin{proof}
Let $e = \{u,v\}$ be an arbitrary edge of~$\calD$.
We consider $u$ as the star vertex.
By \Cref{theorem:convex_HC} there exists a plane Hamiltonian subdrawing $\calD'$ of $\calD$ that contains all edges incident to~$u$.
Let the plane Hamiltonian cycle $C$ in $\calD'$ traverse $u,x_1,\ldots,x_{n-1}$ in this order with $v=x_i$ for some index~$i$.
If $i=1$ or $i=n-1$, then $C$ contains the edge $\{u,v\}$ and therefore yields a Hamiltonian path that fulfills the desired property.
Otherwise $x_{i-1},x_{i-2},\ldots,x_{1},u,v,x_{i+1},\ldots,x_{n-1}$ is a Hamiltonian path containing the edge~$e$.
It is plane because all its edges belong to the plane drawing~$\calD'$.
\end{proof}

Given this property of prescribing an edge for a Hamiltonian path, a natural generalization is to prescribe more than one edge.
To obtain a plane substructure, the prescribed edges cannot induce a crossing in the drawing.
Moreover, as illustrated in \Cref{fig:noprescribededges_a}, prescribing two adjacent edges or longer subpaths is not possible in general.
It is however possible to prescribe two non-crossing independent edges for a plane Hamiltonian path in a geometric or even pseudolinear drawing of~$K_n$.

\begin{figure}[htb]
\centering
\subcaptionbox{\label{fig:noprescribededges_a}}[.4\textwidth]{\includegraphics[page=1]{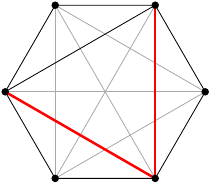}}
\subcaptionbox{\label{fig:noprescribededges_b}}[.48\textwidth]{\includegraphics[page=2]{two_prescribed_dcg.pdf}}
\caption{\subref{fig:noprescribededges_a}~Two adjacent edges in a geometric drawing of $K_6$ that cannot be extended to a plane Hamiltonian path. 
\subref{fig:noprescribededges_b}~An illustration of \Cref{proposition:HP_pair_indep_edges}: extending two independent edges in a pseudolinear drawing of~$K_n$ to a plane Hamiltonian path.}
\label{fig:noprescribededges}
\end{figure}

\begin{proposition}
\label{proposition:HP_pair_indep_edges}
For every pseudolinear drawing $\calD$ of $K_n$ and every pair of non-crossing independent edges $e$ and~$e'$ in $\calD$, there exists a plane Hamiltonian path containing $e$ and~$e'$.
\end{proposition}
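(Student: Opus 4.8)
The plan is to exploit that a geometric drawing is just a point set and that any \emph{monotone} spanning path is automatically plane: if the points are sorted by a generic direction $\vec u$, then consecutive edges of the resulting path occupy disjoint $\vec u$-intervals, so no two edges cross. Two building blocks follow. First, for any convex-hull vertex $v$ of a point set $R$ one can choose $\vec u$ maximized uniquely at $v$; the $\vec u$-monotone path of $R$ is then plane and has $v$ as an endpoint. Second, if some direction $\vec u$ makes $\{a,b\}$ the two $\vec u$-smallest and $\{c,d\}$ the two $\vec u$-largest points of the whole set, then the $\vec u$-monotone path reads $a\,b\,\dots\,c\,d$, is plane, and already contains both prescribed edges. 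So in the lucky case where such a direction exists we are immediately done; the work lies in the case where the prescribed edges are not extremal.

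In general I would first separate the two edges. Since $e=\{a,b\}$ and $e'=\{c,d\}$ are disjoint compact convex sets (independent and non-crossing), a line $\ell$ strictly separates them, and after perturbing $\ell$ off the finitely many points I obtain a partition $P=A\cup B$ with $a,b\in A$ on one side and $c,d\in B$ on the other. Choosing $\ell$ vertical and the direction generic, let $r$ be the rightmost point of $A$ and $s$ the leftmost point of $B$. The connecting segment $rs$ is then safe: every $A$--$A$ segment lies in $\{x\le x_r\}$ and every $B$--$B$ segment in $\{x\ge x_s\}$, so $rs$ meets them only at $r$ respectively $s$. Hence it suffices to produce, in each half independently, a plane Hamiltonian path through the prescribed edge that \emph{ends at the facing extreme vertex}, and to concatenate $H_A$, $rs$, $H_B$.

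This reduces the statement to a one-edge lemma: in a point set $Q$ with a prescribed edge $\{a,b\}$ and prescribed extreme vertex $r$, find a plane Hamiltonian path through $\{a,b\}$ ending at $r$. To force the edge $\{a,b\}$ into the path I would split $Q$ by the line $\overline{ab}$ into $Q_L$ and $Q_R$ (no other point lies on this line, by general position). Building a plane path of $Q_L\cup\{a\}$ ending at $a$ and a plane path of $Q_R\cup\{b\}$ ending at $b$ --- each living in its closed half-plane and obtained from the hull-vertex building block above --- and joining them along the on-line edge $\{a,b\}$ yields a plane path through $\{a,b\}$, since the two halves meet the line $\overline{ab}$ only at $a$ and $b$.

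The main obstacle is endpoint control. The construction above fixes $a$ and $b$ as \emph{interior} vertices of $H_A$, so the two ends of $H_A$ are the far ends of the two half-paths and need not be the prescribed connector $r$; forcing one of them to equal $r$ amounts to producing a plane spanning path with \emph{two} prescribed hull endpoints (namely $r$ together with $a$ or $b$), which is strictly stronger than the single-endpoint monotone construction. I expect to resolve this by induction on $|Q|$, peeling the extreme vertex $r$: when $r\notin\{a,b\}$ one removes $r$, obtains by induction a path ending at the second-extreme vertex, and reattaches $r$ by the safe $x$-range argument. The genuinely delicate case --- and the crux of the whole proof --- is when the prescribed edge is incident to the connector $r$ (and, at the top level, when every convex-hull vertex of a half already lies among $\{a,b,c,d\}$), where the edge must be routed along the hull boundary so that reattaching the extreme vertex stays plane. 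Settling this incidence case, together with the small base cases on four points, completes the argument.
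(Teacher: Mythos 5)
Your argument is incomplete at exactly the point you flag as the crux, and unfortunately that crux cannot be settled: the one-edge lemma you reduce to is \emph{false} in the incidence case. Concretely, suppose $Q$ consists of $a=(0,0)$, $b=(10,0)$, and two points $p=(5,\,0.01)$, $q=(5.05,\,-0.01)$ just above and below the segment $ab$ (perturbed to general position), so that $b$ is the rightmost point of $Q$. Any Hamiltonian path of $Q$ that ends at $b$ and contains $\{a,b\}$ must have $\{a,b\}$ as its last edge (the endpoint $b$ has degree one in the path), so it must read $p,q,a,b$ or $q,p,a,b$; in either case the edge $pq$ crosses the open segment $ab$, so no plane such path exists. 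Moreover, this situation is forced in your reduction, not just possible: take $e=\{a,b\}$ as above and $e'=\{c,d\}$ with $c=(1000,200)$, $d=(1000,-200)$. Then every line separating $e$ from $e'$ keeps $p$ and $q$ on the side of $e$ (the segments $pd$ and $qc$ meet the open segment $ab$, so $\{a,b\}$ cannot be linearly separated from $\{c,d,p\}$ or $\{c,d,q\}$), and every admissible separating direction satisfies $|\tan\theta|<4.95$ while $p$ or $q$ would only become extremal for $|\tan\theta|>495$; hence the facing extreme vertex of $A$ is $b$ for \emph{every} choice of separating line, and the required path $H_A$ does not exist. So the gap is not a missing computation but a wrong target: ``plane Hamiltonian path through a prescribed edge, ending at a prescribed extreme vertex'' is simply not achievable in general, and no routing along the hull can rescue it. (Your peeling step and the safety of the bridge segment are fine; the failure is entirely in the incidence case.)

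The paper's proof is engineered precisely to avoid this trap, and comparing the two shows where your decomposition goes wrong. Since $e$ and $e'$ do not cross, one of them, say $e$, avoids the supporting line $\ell'$ of the other; cutting the plane by $\ell$ and $\ell'$ then yields \emph{three} convex regions $R_1,R_2,R_3$ with $e$ on the common boundary of $R_1$ and $R_2$ and $e'$ on the boundary of $R_3$. The prescribed edges themselves serve as the bridges between the three sub-paths, so inside each region one only needs a plane Hamiltonian path with prescribed \emph{endpoints} (plane Hamiltonian connectivity, \cref{thm:stpath_convex}), never a path that contains a prescribed edge \emph{and} ends at a prescribed vertex. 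Your two-block decomposition forces the prescribed edge and the bridge endpoint into the same sub-problem, which is exactly the combination that can fail. If you want to salvage your setup, the repair is to follow the paper's idea: let the prescribed edges do the bridging (split each block further along the line supporting its prescribed edge), reducing everything to two-endpoint Hamiltonian connectivity inside convex regions.
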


\begin{proof}
Let $e = \{u,v\}$ and $e'= \{u',v'\}$ be the two independent edges that do not cross in the pseudolinear drawing~$\calD$.
By the definition of pseudolinear drawings, we can extend both $e$ and $e'$ to pseudolines $\ell$ and $\ell'$, respectively.
And because $e$ and $e'$ do not cross, we may assume that $e$ has no crossing with $\ell'$ as otherwise we exchange the roles of $e$ and~$e'$.
So $\ell'$ partitions the plane into two regions with $e$ being contained in one of them.
We further partition the region containing $e$ using the pseudoline $\ell$ and obtain three regions: $R_1$ and $R_2$ contain $e$ on its boundary, and $R_3$ has $e'$ on its boundary; see \Cref{fig:noprescribededges_b} for an illustration.
Moreover, we may assume $u'$ lies on the boundary of $R_2$ as otherwise we exchange the roles of $R_1$ and~$R_2$.
By starting at an arbitrary interior vertex $x$ of $R_1$, we find a plane path $P_1$ in $R_1$ that traverses all interior vertices from $R_1$ and ends in vertex~$u$.
For this we either start at $x$ and traverse the points in the rotation around $x$ starting at the point after~$u$.
Or, if $x$ lies on the convex hull of the vertices in $R_1$ without $v$, we traverse the points on one side of the pseudoline through $x$ and $u$ in the rotation around $x$ until the point $w$ before $u$ and then traverse the remaining vertices in the rotation around~$w$.
Note that this is the statement of \Cref{thm:stpath_convex} for pseudolinear drawings.
Similarly, there is a plane path $P_2$ in $R_2$ starting at vertex~$v$, traversing all interior vertices from~$R_2$, and ending in vertex~$u'$.
In the same manner, there is a plane path $P_3$ in $R_3$ starting at vertex~$v'$ and traversing all interior vertices from~$R_3$.
By concatenating $P_1$, $e$, $P_2$, $e'$, and $P_3$ we obtain the desired Hamiltonian path, which is plane because the three regions are separated by the pseudolines $\ell'$ and~$\ell$.
\end{proof}

Even though we only used a basic notion of left and right in the argument above, the result does not generalize to convex drawings because we cannot extend edges in that setting.
And indeed prescribing two edges is in general not possible in convex drawings.

\begin{lemma}
For every $n \geq 8$, there exists an h-convex drawing $\mathcal{D}$ of $K_n$ and non-crossing independent edges $e$ and $e'$ in $\mathcal{D}$ such that no plane Hamiltonian path contains $e$ and $e'$.
\end{lemma}

\begin{proof}
We define a specific drawing $\mathcal{D}$ of $K_n$ for given $n \geq 8$ by placing the vertices $1, \ldots, n$ in that order on a horizontal line and drawing each edge either \emph{above} or \emph{below} that line.
We fix the edges $e = \{ a, c \}$ and $e' = \{ b, n \}$ with $1 < a < b-1$ and $b+1 < c < n-1$.
Moreover, we define the vertex sets $A = \{ 1, \ldots, a \}$, $B = \{ a+1, \ldots, b \}$, $C = \{ b+1, \ldots, c \}$, and $D = \{ c+1, \ldots, n \}$.
Then we draw all edges between vertices of $B$ and $C$, between vertices of $A$ and $D$, and between vertices of $B$ and $D$ below.
All other edges we draw above.
In particular, $e$ is drawn above and $e'$ is drawn below.
\Cref{fig:not2prescribededges} shows the resulting drawing for $n=8$.

We next show, by contradiction, that $e$ and $e'$ are not contained in any common plane Hamiltonian path~$P$.
Such a potential path $P$ consist of three parts, one before $e$, one between $e$ and~$e'$, and one after~$e'$.
Let $A' = A \setminus \{ a \}$ and similarly $B'$, $C'$, and $D' = D \setminus \{ n \}$.
Since all edges between two of those sets are crossed either by $e$ or~$e'$, each of the three parts of $P$ can contain vertices of at most one of the four sets $A'$, $B'$, $C'$, and~$D'$.
And because all four sets are non-empty but there are only three parts of $P$ to fit them, such a plane Hamiltonian path $P$ cannot exist.

To see that $\mathcal{D}$ is h-convex, we consider all triangles in the drawing.
In particular, for all triangles with exactly one vertex in $A$, $C$, and $D$ their outside is convex.
For all other triangles their inside is convex.
And it is a straightforward task to check that this choice of convex sides fulfills the properties of an h-convex drawing, which finishes the proof.
\end{proof}

\begin{figure}[htb]
\centering
\includegraphics[page=2]{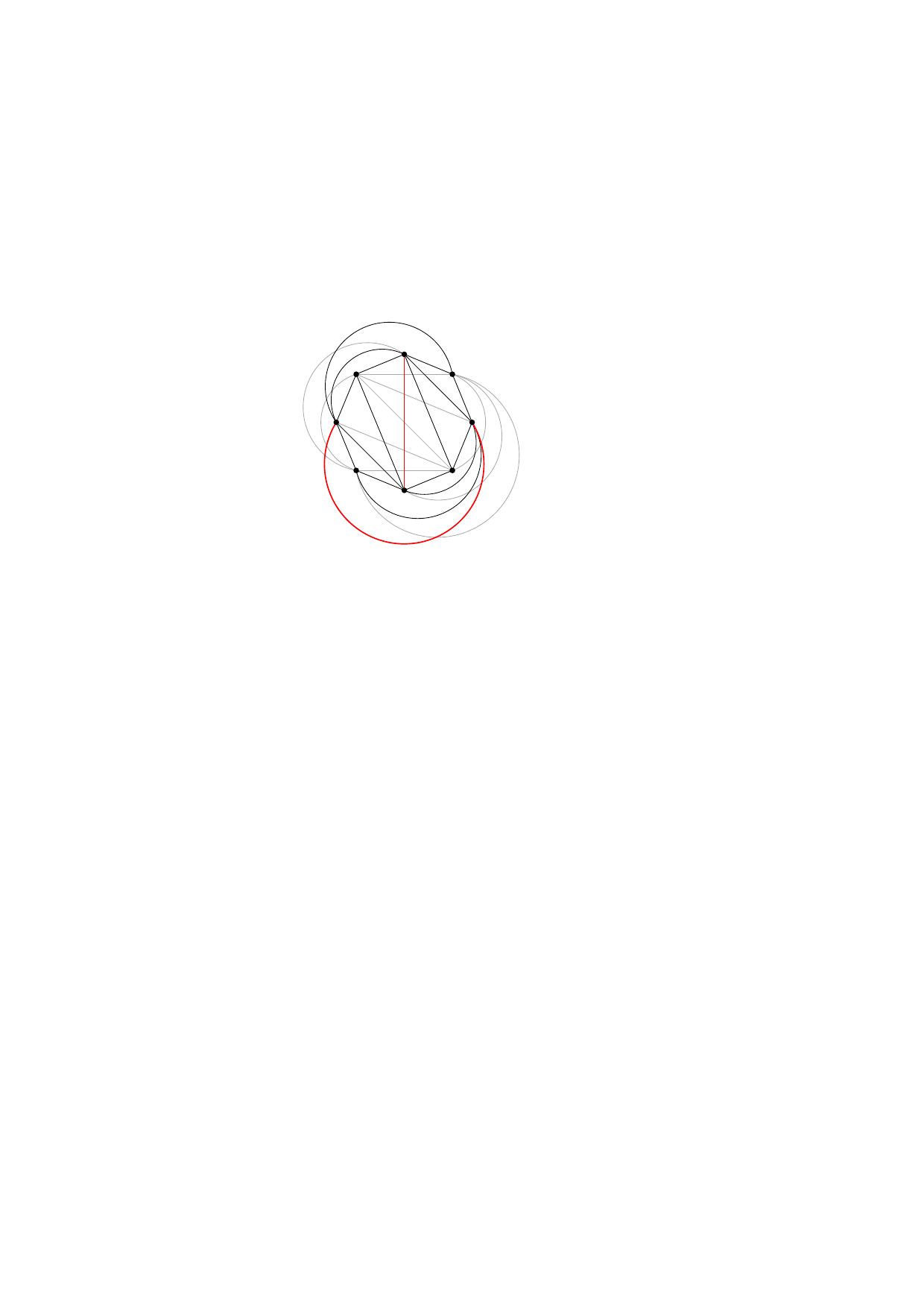}
\caption{An h-convex drawing of $K_8$. The two prescribed edges, highlighted red, are not contained in any common plane Hamiltonian path.}
\label{fig:not2prescribededges}
\end{figure}

Note that the constructed drawing $\mathcal{D}$ is actually almost f-convex, i.e., pseudolinear:
If either $A$, $C$, or $D$ contains only $2$ vertices, then removing those $2$ vertices makes the insides of all triangles convex.

Besides prescribing edges, \Cref{theorem:convex_HC} implies that for every $3 \leq k \leq n$ every convex drawing admits a plane cycle of length $k$, as we argue below.
In abstract graphs the existence of cycles of arbitrary length is often referred to as \emph{pancyclicity}.
Recall that a plane cycle has two sides.
In addition to cycles of arbitrary length, we find cycles such that one side does not contain vertices in its interior, i.e., \emph{empty $k$-cycles}.
For $k=3$ those are known as \emph{empty triangles}, a concept which is considered often in the study of simple drawings.
For simple drawings of $K_n$ with $n\ge 3$, Harborth~\cite{Harborth98} proved that there are at least two empty triangles and conjectured that the minimum among all simple drawings of~$K_n$ is $2n-4$.
Today it is known that there are at least $n$ empty triangles~\cite{AichholzerHPRSV2015}, and that all generalized twisted drawings have exactly $2n-4$ empty triangles~\cite{gtvw-2023-etgtdkn}.

Note that every plane Hamiltonian cycle is an empty $n$-cycle as both of its sides are empty.
Moreover, from the properties of the plane Hamiltonian cycle constructed in \Cref{theorem:convex_HC}, we derive the existence of empty $k$-cycles for all $k$.

\begin{theorem}\label{thm:empty-k-cycles}
If a simple drawing $\calD$ of $K_n$ contains a vertex $v_\star$ and a plane Hamiltonian cycle that does not cross any edge incident to $v_\star$, then there exists an empty $k$-cycle in $\calD$ for every integer $3 \leq k \leq n$.
\end{theorem}

\begin{proof}
The plane Hamiltonian cycle consists of two star edges incident to~$v_\star$ and a plane Hamiltonian path $P$ on the remaining vertices $1, \ldots, n-1$.
The crucial property is that $P$ does not cross any star edge.
The plane path $P$ starts at some vertex $v \in \{1, \ldots, n-1\}$.
Similar to \Cref{obs:HCplane}, since $P$ visits all vertices $1, \ldots, n-1$ and does not cross itself, for every $1 \le k \le n-1$, the indices of the first $k$ vertices of $P$ determine an integer interval $\{i,i+1,\ldots,j-1,j\}$.
Hence the next vertex $v'$ of $P$ is either $i-1$ or $j+1$.
Inductively it follows that closing the path after visiting $k-1$ vertices with the two star edges $\{v,v_\star\}$ and $\{v',v_\star\}$ gives the desired empty $k$-cycle.
\end{proof}

\begin{corollary}
\label{cor:emptykcycle}
For every convex drawing $\calD$ of $K_n$ and every integer $3 \leq k \leq n$ there exists an empty $k$-cycle in~$\calD$.
\end{corollary}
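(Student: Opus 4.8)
The plan is to deduce this directly from \cref{theorem:convex_HC} by using a single plane Hamiltonian cycle together with its uncrossed spanning star, and then reading off one empty $k$-cycle for each $k$ as an appropriate initial segment of the cycle closed off by a star edge.

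First I would fix an arbitrary star vertex $v_\star$ and apply \cref{theorem:convex_HC} to obtain a plane Hamiltonian cycle $C$ that is not crossed by any edge incident to~$v_\star$. Writing $x_0 = v_\star$ and letting $C$ traverse $x_0, x_1, \dots, x_{n-1}$ and back to~$x_0$, this means that the subdrawing $\calD'$ consisting of $C$ together with all star edges $\{x_0, x_i\}$ is plane: the star edges pairwise do not cross (property of a simple drawing) and none of them crosses $C$ (by the theorem). This is exactly the plane Hamiltonian subdrawing with a spanning star centered at $v_\star$ promised by the theorem.

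Next, for each $k$ with $3 \le k \le n$ I would consider the cycle $Z_k = x_0 x_1 \cdots x_{k-1} x_0$, consisting of the subpath of $C$ from $x_0$ to $x_{k-1}$ together with the closing star edge $\{x_{k-1}, x_0\}$; for $k = n$ this closing edge is the cycle edge $\{x_{n-1}, x_0\}$ and $Z_n = C$. Since every edge of $Z_k$ lies in the plane subdrawing $\calD'$, the curve $Z_k$ is a plane cycle of length exactly~$k$. The star edge $\{x_{k-1}, x_0\}$, not crossing $C$, lies in one of the two closed sides of~$C$; hence $Z_k$ bounds a region $R$ whose interior is contained in the corresponding open side of~$C$.

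The final and key observation is that, because $C$ is a Hamiltonian cycle, all $n$ vertices of $\calD$ lie on~$C$, so neither open side of $C$ contains a vertex in its interior. In particular, the interior of $R$ is free of vertices, so $R$ is an empty side of $Z_k$ and therefore $Z_k$ is an empty $k$-cycle. I expect the only delicate point---the main, though minor, obstacle---to be the Jordan-curve bookkeeping: verifying that the arc of $C$ together with the non-crossing chord $\{x_{k-1}, x_0\}$ really encloses a region lying inside a single side of~$C$, so that its interior inherits emptiness from that side. Everything else is immediate once \cref{theorem:convex_HC} supplies the uncrossed spanning star.
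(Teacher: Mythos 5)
Your proof is correct, and it takes a genuinely different route from the paper's. Both arguments end up using the same family of cycles: the arc of the Hamiltonian cycle $C$ from \cref{theorem:convex_HC} consisting of $v_\star$ and the next $k-1$ vertices, closed off by the star edge back to $v_\star$. The difference lies in how emptiness is justified. The paper does \emph{not} use \cref{theorem:convex_HC} as a black box: it invokes a property of the specific cycle constructed in that theorem's proof, namely that every prefix of the Hamiltonian path $P$ on the non-star vertices occupies a contiguous interval in the rotation around $v_\star$; emptiness then follows from a wedge argument at $v_\star$, since every vertex not on the closed-off prefix has its (uncrossed) star edge emanating into the wedge on the other side of the cycle. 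Your argument replaces this construction-specific interval property by a Jordan-curve observation that works for \emph{any} plane Hamiltonian cycle with an uncrossed spanning star: both open sides of $C$ are vertex-free by Hamiltonicity, the chord $\{x_{k-1}, v_\star\}$ lies in one of these sides and splits it into two vertex-free regions, and one of those regions is bounded by $Z_k$. This buys robustness and generality -- you need only the statement of \cref{theorem:convex_HC}, not the details of its construction, and you in fact prove the stronger assertion that any plane Hamiltonian cycle admitting a vertex whose star it does not cross yields empty $k$-cycles for all $3 \le k \le n$. The one point you flagged, that the chord together with the arc bounds a disk contained in a single side of $C$, is indeed the only topological bookkeeping required, and it is standard: the open chord is connected and disjoint from $C$, so it lies in one open side, which it splits into two disks whose interiors inherit vertex-freeness.
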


\pagebreak

\section{Plane Hamiltonian subdrawings in convex drawings}
\label{sec:convex_HC_proof}

In this section we give the full proof of \Cref{theorem:convex_HC}.
We prove the existence of the Hamiltonian cycle in a constructive way and present an $\bigO(n^2)$-time algorithm that, for a given convex drawing $\calD$ of $K_n$ and a fixed vertex $v_{\star}$ in $\calD$, computes a plane Hamiltonian cycle that does not cross edges incident to $v_{\star}$.

As in the proof of Rafla's conjecture for convex drawings (\Cref{thm:rafla_convex}) we consider~$v_\star$ as the star vertex, assume $v_\star = n$ and label the remaining vertices with $1, \ldots, n-1$ corresponding to the rotation of~$v_\star$.
Similar to \Cref{sec:paths_and_cycles}, bad edges will play an important role in this proof.
For drawings with at most one bad edge, we can concatenate $n-2$ good edges around~$v_\star$ to obtain a plane path through all non-star vertices, which does not cross any star edges.
To obtain the desired plane Hamiltonian cycle, we add the two star edges connecting the two end-vertices of the path to~$v_\star$.

\begin{observation}
\label{lem:onebadedgePHC}
If there is at most one bad edge, then $\calD$ contains a plane Hamiltonian cycle that does not cross any star edges and visits the non-star vertices in the cyclic order around the star vertex $v_{\star}$.
\end{observation}

Hence in the following we consider the case that there are at least two bad edges.
Recall that, for a bad edge $b$, by \Cref{obs:ConvexSide} the unique convex side of $T_b$ is the side not containing the witnesses.
To deal with multiple bad edges, we investigate their structure in the following lemma, which extends \Cref{lemma:edges_nonconvexside}.

\begin{lemma}
\label{lem:twobadedges1}
Let $b = \{v, v+1\}$ and $b' = \{v', v'+1\}$ be two distinct bad edges with witnesses $w$ and $w'$, respectively. Then the following four statements hold:
\begin{enumerate}[(i)]
\item \label{lem:twobadedges1:item0} Either the edge $b$ does not cross any star edge of~$T_{b'}$ or the edge $b'$ does not cross any star edge of~$T_b$.
\item \label{lem:twobadedges1:item1} The triangle $T_b$ is contained in the convex side of $T_{b'}$ and vice versa.
\item \label{lem:twobadedges1:item2} $w \neq w'$.
\item \label{lem:twobadedges1:item3} $w',w,v,v'$ appear in this or the reversed cyclic order in the rotation of~$v_{\star}$.
\end{enumerate}
\end{lemma}

\begin{proof}
To prove Property~\eqref{lem:twobadedges1:item0}, we first consider the case that $b$ and $b'$ share a vertex.
We assume without loss of generality that $v+1 = v'$.
The subdrawing induced by the four vertices $v,v',v'+1,v_\star$ is a simple drawing of $K_4$ and has therefore at most one crossing.
If both $b$ and $b'$ cross a star edge of the other triangle, there would be two crossings on that~$K_4$.
Hence the property holds in this case.

For the remaining case in which $b$ and $b'$ are independent, consider the subdrawing induced by the five distinct vertices $v,v+1,v',v'+1,v_\star$.
If $b$ crosses both star edges of $T_{b'}$, then $v'$ and $v'+1$ are both in the non-convex side of~$T_b$.
Thus, by \Cref{lemma:edges_nonconvexside}, $b'$ does not cross any star edge of $T_b$ and the property follows.
An equivalent argument holds if $b'$ crosses both star edges of~$T_b$.

Hence, from now on we assume that $b$ crosses exactly one star edge of $T_{b'}$ and $b'$ crosses exactly one star edge of~$T_b$.
Without loss of generality assume that $b$ crosses $\{v',v_\star\}$, the other case is symmetric.
Since $b'$ crosses exactly one star edge of $T_b$ and the two vertices $v'$ and $v'+1$ are on different sides of $T_b$, the edge $b'$ crosses $T_b$ exactly once.
Similarly to the arguments in the proof of \Cref{lemma:edges_nonconvexside}, a crossing of $b'$ and $\{v+1,v_\star\}$ forces an additional crossing with~$b$.
For an illustration see \Cref{fig:onecrossing_a_app}.
Hence, $b'$~crosses the star edge $\{v,v_\star\}$.

We now consider the edge $\{v+1,v'+1\}$.
Its vertices $v+1$ and $v'+1$ are both in the convex side of both triangles $T_b$ and~$T_{b'}$.
To stay in the convex side of $T_{b}$, starting at $v+1$, the edge $\{v+1,v'+1\}$ is forced to cross $\{v',v_\star\}$, see \Cref{fig:onecrossing_b_app}.
This contradicts $\{v+1,v'+1\}$ staying in the convex side of $T_{b'}$, which completes the proof of Property~\eqref{lem:twobadedges1:item0}.

\begin{figure}[htb]
\centering  
\subcaptionbox{\label{fig:onecrossing_a_app}}[.49\textwidth]{\includegraphics[page=4]{newlemma_dcg.pdf}}
\subcaptionbox{\label{fig:onecrossing_b_app}}[.49\textwidth]{\includegraphics[page=2]{newlemma_dcg.pdf}}
\caption{Illustrations of the two case where $b$ and $b'$ cross exactly one star edge of the other triangle: \subref{fig:onecrossing_a_app}~$b'$ crosses $\{v+1,v_\star\}$ and \subref{fig:onecrossing_b_app}~$b'$ crosses $\{v,v_\star\}$.}
\label{fig:onecrossing_app}
\end{figure}

Next we derive Property~\eqref{lem:twobadedges1:item1} from Property~\eqref{lem:twobadedges1:item0}.
Assume without loss of generality that $b$ does not cross any star edge of $T_{b'}$.
Then, by definition, $v'$ and $v'+1$ are not witnesses for~$b$ and hence by \Cref{obs:ConvexSide} lie in the convex side of~$T_b$.
Since $v_\star$ lies in the convex side of $T_b$, by convexity, the triangle $T_{b'}$ is contained in the convex side of $T_b$.
In particular, $b'$~does not cross any star edge of $T_b$.
With the same arguments as for $b$, $T_b$ is contained in the convex side of $T_{b'}$, implying Property~\eqref{lem:twobadedges1:item1}.

To show Property~\eqref{lem:twobadedges1:item2}, recall that, by \Cref{obs:ConvexSide}, the witnesses $w$ and $w'$ lie in the interior of the non-convex side of $T_b$ and $T_{b'}$, respectively.
Now Property~\eqref{lem:twobadedges1:item1} implies that these non-convex sides are interiorly disjoint, and therefore $b$ and $b'$ do not have any common witness, i.e., $w \neq w'$.

It remains to show Property~\eqref{lem:twobadedges1:item3}, which concerns the cyclic order of vertices around~$v_\star$.
Up to symmetries there are three possibilities, namely $w,w',v,v'$ or $w,v,w',v'$ or the claimed case $w',w,v,v'$.
In the first case, it is not possible that $b'$ crosses $\{w',v_\star\}$ without crossing $T_b$ or $\{w,v_\star\}$ because $\{w',v_\star\}$ is contained in a region bounded by $\{v,v_\star\}$ and parts of $\{w,v_\star\}$ and~$b$.
However, crossing $T_b$ would contradict Property~\eqref{lem:twobadedges1:item1} and crossing $\{w,v_\star\}$ would contradict Property~\eqref{lem:twobadedges1:item2}.
For an illustration see \Cref{fig:cyclicorder_a_app}.

In the second case, we consider the edge $\{v,v'\}$; see \Cref{fig:cyclicorder_b_app}. 
Since $v$ is in the convex side of $T_{b'}$ and $v'$ is in the convex side of $T_b$, $\{v,v'\}$ is in the intersection of both convex sides.
In particular, it crosses $\{w,v_\star\}$ and $\{w',v_\star\}$.
Moreover, the triangle $T_{\{v,v'\}}$ (spanned by the edge $\{v,v'\}$ and $v_\star$) has $w$ on one side and $w'$ on the other.
Hence $T_{\{v,v'\}}$ has no convex side; a contradiction.
This completes the proof of Property~\eqref{lem:twobadedges1:item3}.
\end{proof}

\begin{figure}[htb]
\centering
\subcaptionbox{\label{fig:cyclicorder_a_app}}[.49\textwidth]{\includegraphics[page=1]{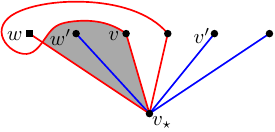}}
\subcaptionbox{\label{fig:cyclicorder_b_app}}[.49\textwidth]{\includegraphics[page=2]{cyclicorder_dcg.pdf}}
\caption{Illustration that the two shown cyclic orders cannot appear in a convex drawing: \subref{fig:cyclicorder_a_app}~the cyclic order $w,w',v,v'$ and \subref{fig:cyclicorder_b_app} the cyclic order $w,v,w',v'$.}
\label{fig:cyclicorder_app}
\end{figure}

Property~\eqref{lem:twobadedges1:item3} implies that the vertices $1, \ldots, n-1$ in the rotation of $v_{\star}$ can be partitioned into two blocks of consecutive vertices such that one block contains all vertices of all bad edges and the other block contains all the witnesses.
In particular, if $b$ is the bad edge whose vertices are last in the clockwise order of its block, we can cyclically relabel the vertices such that $b$ becomes $\{n-2,n-1\}$.
This makes the labels of all witnesses smaller than the labels of vertices of bad edges.
We have the following two properties:

\begin{description}
\item[(sidedness)] If $\{v,v+1\}$ is a bad edge with witness $w$, then $w < v$.
\item[(nestedness)] If $b = \{v, v+1\}$ and $b' = \{v', v'+1\}$ are bad edges with respective witnesses $w$ and $w'$ and if $v < v'$, then $w > w'$.
\end{description}

In addition we can choose the outer face such that the vertex~$v_{\star}$ and the initial parts of the edges $\{v_\star,1\}$ and $\{v_\star,n-1\}$ are incident to it.

The nesting property implies that we can label the bad edges as $b_1, \ldots, b_m$ for some $m \geq 2$, such that if $b_i = \{v_i,v_i +1\}$, then $1 < v_1 < v_2 < \ldots < v_m = n-2$.
Moreover, let $w_i^L$ and $w_i^R$ denote the leftmost (smallest index) and the rightmost (largest index) witness of the bad edge $b_i$, respectively.
Then $1 \leq w_{m}^L \leq w_{m}^R < w_{m-1}^L \leq w_{m-1}^R < \ldots < w_{1}^L \leq w_{1}^R $.
Sidedness additionally implies $w_{i}^R < v_i$ for all $i =1, \ldots, m$.
\Cref{fig:layeringbadedgeswithnotation} shows the situation for two bad edges $b_i$ and $b_{i+1}$.
Note that $v_i+1 = v_{i+1}$ is possible.

\begin{figure}[htb]
\centering
\includegraphics{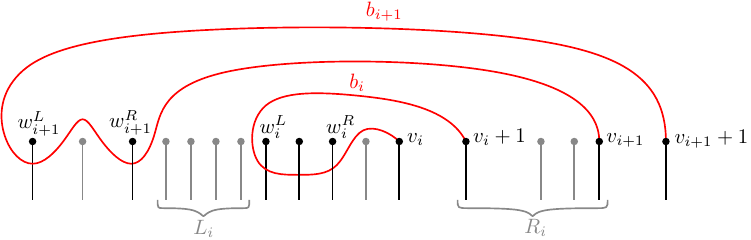}
\caption{Illustration of sidedness and nestedness for two bad edges. All vertical edges are incident to the star vertex $v_\star$.}
\label{fig:layeringbadedgeswithnotation}
\end{figure}

In a first step, we consider edges between one of the two vertices of a bad edge $b_i$ and a witness of $b_i$.
In particular, we show that $\{w_i^L, v_i+1\}, \{w_i^R, v_i\}$ are not star-crossing.

\begin{lemma}
\label{lem:leftrightmostwitness}
For all $i = 1, \ldots, m$, neither $\{w_i^L, v_i+1\}$ nor $\{w_i^R, v_i\}$ is star-crossing.
\end{lemma}

\begin{proof}
For every fixed $i$, \Cref{lemma:edges_nonconvexside} shows that both edges are contained in the non-convex side of the triangle $T_{b_i}$.
Assume $\{w_i^L,v_i+1\}$ crosses a star edge $\{x,v_{\star}\}$.
Then $x$ is a witness of $b_i$ with $w_i^L < x$ by the choice of~$w_i^L$.
However, the side of the triangle $\{w_i^L,v_i+1,v_{\star}\}$ that contains $x$ is not convex due to the edge $\{x,v_{\star}\}$.
Additionally, the other side is not convex due to the edge $\{v_i,v_i+1\}$; a contradiction.
A similar argument holds for the edge $\{w_i^R,v_i\}$. The two situations are depicted in \Cref{fig:left-most-witness,fig:right-most-witness}.
\end{proof}

\begin{figure}[htb]
\centering
\subcaptionbox{\label{fig:left-most-witness}}[.49\textwidth]{\includegraphics[page=2]{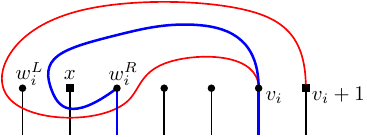}}
\subcaptionbox{\label{fig:right-most-witness}}[.49\textwidth]{\includegraphics[page=1]{insidebadedge_dcg.pdf}}
\caption{The edges $\{w_i^L,v_i+1\}$ and $\{w_i^R,v_i\}$ cannot be star-crossing because otherwise the blue triangle has no convex side, as witnessed by edges incident to the vertices marked with squares. \subref{fig:left-most-witness} depicts the first case and \subref{fig:right-most-witness} the second case.}
\label{fig:left-right-most-witness}
\end{figure}

For $i = 1, \ldots, m-1$ let $L_i = \{\, x \mid w_{i+1}^R < x < w_i^L \,\}$ and $R_i = \{\, x \mid v_{i} +1\leq x \leq v_{i+1} \,\}$ denote the left and the right blocks of vertices between two consecutive bad edges $b_i$ and~$b_{i+1}$; see \Cref{fig:layeringbadedgeswithnotation}.
Note that $R_i$ is non-empty since it always contains $v_i+1$ and $v_{i+1}$ but $L_i$ might be empty.

In the following we seek edges from $L_i$ to $R_i$ that are not star-crossing.
First, we show that the properties of a convex drawing imply that the edges between vertices of $L_i$ and $R_i$ stay in the region between the two bad edges $b_i$ and $b_{i+1}$.
This implies that the only star edges which can be crossed by those edges are $\{x,v_\star\}$ with $x \in L_i \cup R_i$.

\begin{lemma}
\label{lem:nocrossingsoutsidelandR}
No edge $\{u,v\}$ with $u,v \in L_i \cup R_i$ crosses star edges $\{z,v_{\star}\}$ with $z \in V \setminus (L_i \cup R_i)$.
\end{lemma}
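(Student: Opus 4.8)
The plan is to show that an edge $\{u,v\}$ with $u,v\in L_i\cup R_i$ can only cross star edges incident to vertices of $L_i\cup R_i$. The decisive observation is that \emph{every} vertex of $L_i\cup R_i$ is a non-witness of \emph{every} bad edge: a vertex of $R_i=\{x:v_i+1\le x\le v_{i+1}\}$ lies in the block of bad-edge vertices and hence witnesses nothing, while a vertex of $L_i=\{x:w_{i+1}^R<x<w_i^L\}$ lies strictly between the two consecutive witness intervals of $b_{i+1}$ and $b_i$ and therefore belongs to no witness set. Consequently, for each bad edge $b_j$ both endpoints $u,v$ lie in the convex side of $T_{b_j}$ (by \Cref{obs:ConvexSide}), so by the definition of convexity the whole edge $\{u,v\}$ is contained in $\overline{S_C^j}$; in particular $\{u,v\}$ avoids the interior of every non-convex side $S_N^j$.

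From this confinement I would first read off the easy part. Since $\{u,v\}$ stays inside $\overline{S_C^j}$, it cannot cross any edge of the bounding triangle $T_{b_j}$; in particular it crosses neither $b_j$ nor the two star edges $\{v_j,v_\star\}$ and $\{v_j+1,v_\star\}$. This already rules out every star edge to a bad-edge vertex, in particular the star edges to $v_i$ and to $v_{i+1}+1$, the two bad-edge vertices flanking $L_i\cup R_i$ that do not belong to it (recall $v_i+1,v_{i+1}\in R_i$ are allowed). Next I would localize the remaining crossing partners: considering the subdrawing on $\{u,v,z,v_\star\}$, a single crossing $\{u,v\}\times\{z,v_\star\}$ forces these two edges to be the diagonals of that $K_4$, so $z$ lies strictly between $u$ and $v$ in the rotation of $v_\star$ (with $v_\star$ on the outer face this fixes the relevant arc). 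When $u,v$ lie in the same block this arc stays inside $L_i\cup R_i$ and nothing outside can be crossed, so the only remaining case is the \emph{split} case $u\in L_i$, $v\in R_i$.

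In the split case the dangerous candidates for $z$ form the ``middle band'' between $L_i$ and $R_i$: the witnesses of $b_1,\dots,b_i$ (each lying inside a petal $S_N^j$) together with the vertices of the inner blocks $L_j,R_j$ for $j<i$. The witnesses are the hard instances. Here I would exploit that a witness $z$ of $b_j$ satisfies $\{z,v_\star\}\times b_j$ by definition, while the confinement above shows that the \emph{entire} triangle $T_{\{u,v\}}=\{u,v,v_\star\}$ lies in $\overline{S_C^j}$ and is thus disjoint from the petal $S_N^j\ni z$. A hypothetical crossing $\{z,v_\star\}\times\{u,v\}$ would then force $\{z,v_\star\}$ to re-enter the petal through $\{u,v\}$, which pins down the crossing pattern on the six vertices $u,v,z,v_j,v_j+1,v_\star$; applying the characterization of convex drawings via their $5$-vertex subdrawings (equivalently, the odd-crossing parity of $K_5$) to a suitable $5$-subset should produce a type~IV/V obstruction, a contradiction. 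The inner vertices $L_j,R_j$ with $j<i$ I would handle analogously, using that they are shielded from $L_i\cup R_i$ by the petal $S_N^i$ and the bad edge $b_i$, neither of which $\{u,v\}$ crosses.

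The main obstacle I anticipate is precisely this last step: turning ``$\{u,v\}$ avoids all petals and $z$ sits inside a petal'' into a genuine contradiction for a \emph{witness} $z$. The geometric picture (the star edge $\{z,v_\star\}$ only meets the complement of the petals near $v_\star$, where $\{u,v\}$ need not reach) is convincing, but the confinement to $\overline{S_C^j}$ alone does not forbid a crossing on the short initial segment of $\{z,v_\star\}$; I expect the careful bookkeeping needed to certify that every relevant $5$-vertex pattern is non-convex — and to treat cleanly the boundary subcases such as $v=v_i+1$ — to be the technical crux of the argument.
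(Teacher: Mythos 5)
Your opening move is sound and matches the paper's starting point: every vertex of $L_i \cup R_i$ is a non-witness of every bad edge, so by \Cref{obs:ConvexSide} and the definition of convexity the edge $\{u,v\}$ (indeed the whole triangle $T_{\{u,v\}}$) lies in the closed convex side of every $T_{b_j}$; this correctly rules out crossings with every $b_j$ and with the star edges at bad-edge vertices. The rest, however, has two genuine gaps. First, your $K_4$ claim --- that a crossing of $\{u,v\}$ with $\{z,v_\star\}$ forces $z$ to lie strictly between $u$ and $v$ in the linearized rotation of $v_\star$ --- is false: the crossing pattern of a $K_4$ cannot decide which of the two arcs between $u$ and $v$ contains $z$, and bad edges themselves are counterexamples, since $b=\{v,v+1\}$ crosses $\{w,v_\star\}$ although the arc between $v$ and $v+1$ contains no vertex at all. (What convexity does give is that all crossing partners of a fixed edge lie in a \emph{single} arc, as in \Cref{lem:crossingsonlyoneside}, but not which arc.) Hence your disposal of the same-block case collapses. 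Second, in the split case you concede the crux yourself: you never exhibit the claimed type~IV/V obstruction ruling out a crossing with the initial segment of a witness star edge, and, exactly as you worry, confinement of $\{u,v\}$ to the convex sides alone does not forbid such a crossing near $v_\star$.

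The missing idea --- and what the paper does instead --- is to use two specific star edges as barriers: $\{w_i^L,v_\star\}$ and $\{w_{i+1}^R,v_\star\}$ cut $S_i\cap S_{i+1}$ into three regions, and $L_i\cup R_i$ lies in the middle region bounded by both. Since $\{u,v\}$ is contained in $S_i\cap S_{i+1}$ (your step~1) and can cross each of these two star edges at most once, parity traps $\{u,v\}$ inside the middle region; and every star edge $\{z,v_\star\}$ with $z\notin L_i\cup R_i$ is disjoint from that region, because it cannot cross the two barrier star edges, and crossing $b_i$ or $b_{i+1}$ only takes it into a petal, never into the middle region. This single confinement argument settles the same-block and split cases, witnesses and non-witnesses, uniformly, with no case analysis over $5$- or $6$-vertex subdrawings.
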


\begin{proof}
The convex sides $S_i$ and $S_{i+1}$ of the triangles $T_{b_i}$ and $T_{b_{i+1}}$, respectively, have a common intersection which is partitioned into three regions by the edges $\{w_{i+1}^R,v_{\star}\}$ and $\{w_i^L,v_{\star}\}$.
Both vertices $u,v$ are contained in the region that is bounded by both edges $\{w_{i+1}^R,v_{\star}\}$ and $\{w_i^L,v_{\star}\}$; shaded gray in \Cref{fig:stay-between-bad-edges}.
Since the edge $\{u,v\}$ lies in $S_i$ and $S_{i+1}$ and crosses $\{w_{i+1}^R,v_{\star}\}$ and $\{w_i^L,v_{\star}\}$ at most once, it is contained in the same region.
This shows that we cannot cross star edges $\{z,v_{\star}\}$ where $z$ is outside of this region, i.e., $z \in V \setminus (L_i \cup R_i)$.
\end{proof}

\begin{figure}[htb]
\centering
\includegraphics[page=1]{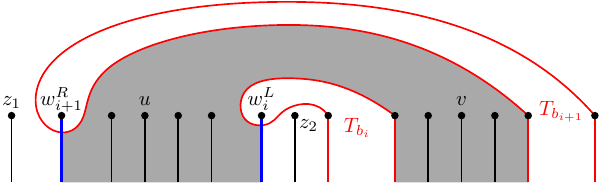}
\caption{Illustration for the proof of \Cref{lem:nocrossingsoutsidelandR}.
All edges $\{u,v\}$ with both end-vertices in the gray shaded region are contained in that region.}
\label{fig:stay-between-bad-edges}
\end{figure}

\begin{figure}[htb]
\centering
\subcaptionbox{\label{fig:nocrossinginRiCase1}}[.49\textwidth]{\includegraphics[page=1]{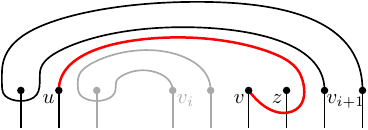}}
\subcaptionbox{\label{fig:nocrossinginRiCase2}}[.49\textwidth]{\includegraphics[page=2]{notexistingconfiguration_dcg.pdf}}
\caption{Illustration of the two forbidden configurations \subref{fig:nocrossinginRiCase1} and \subref{fig:nocrossinginRiCase2} to prove \Cref{lem:nocrossinginRi}.
The red edges cannot cross the star edge $\{z,v_{\star}\}$ as depicted.
The vertices and incident edges that are drawn gray are deleted to achieve a subdrawing contradicting the convexity.}
\label{fig:nocrossinginRi}
\end{figure}

Moreover, we show that an edge from a vertex in $L_i$ to a vertex in $R_i$ cannot cross star edges with vertices in~$R_i$.

\begin{lemma}
\label{lem:nocrossinginRi}
No edge $\{u,v\}$ with $u \in L_i$ and $v \in R_i$ crosses star edges $\{z,v_{\star}\}$ with $z \in R_i$.
\end{lemma}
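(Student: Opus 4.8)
The plan is to argue by contradiction: suppose the edge $\{u,v\}$ with $u \in L_i$ and $v \in R_i$ crosses a star edge $\{z,v_\star\}$ with $z \in R_i$. First I would dispose of the boundary cases. Adjacent edges do not cross, so $z \neq v$. Since $u,v$ lie in the convex side $S_i$ of $T_{b_i}$, the edge $\{u,v\}$ is contained in $\bar S_i$, and because $\{v_i+1,v_\star\}$ bounds $S_i$ we also get $z \neq v_i+1$; thus $z$ is an interior vertex of $R_i$ different from $v$. By \cref{lem:nocrossingsoutsidelandR} every star edge crossed by $\{u,v\}$ has an endpoint in $L_i\cup R_i$, so the case $z\in R_i$ treated here is exactly what remains to be excluded after \cref{lem:nocrossingsoutsidelandR}.

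The key tool is the triangle $T_{\{u,v\}}=\{u,v,v_\star\}$, used in the same spirit as in the proofs of \cref{lemma:edges_nonconvexside} and \cref{lem:leftrightmostwitness}. Since $\{z,v_\star\}$ crosses $\{u,v\}$, the vertex $z$ must lie in the \emph{non-convex} side of $T_{\{u,v\}}$; otherwise, as $v_\star$ is a vertex of $T_{\{u,v\}}$, convexity would force $\{z,v_\star\}$ to stay inside the convex side and it could not meet $\{u,v\}$. Now I would locate the remaining relevant vertices relative to $T_{\{u,v\}}$. Because $u < v_i < v_i+1 \le v$, both endpoints of $b_i$, as well as all witnesses of $b_i$ (whose indices lie in the interval $(u,v_i)$ by sidedness and \cref{obs:ConvexSide}), lie on the side of $T_{\{u,v\}}$ spanned by the arc $(u,v)$ of the rotation. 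Moreover $\{u,v\}$ crosses neither $b_i$ (it lies in $\bar S_i$) nor the star edges of the witnesses of $b_i$ (as $u,v$ are not witnesses of $b_i$), so $b_i$ is confined to a single side of $T_{\{u,v\}}$.

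The main obstacle, and the heart of the argument, is to show that the \emph{other} side of $T_{\{u,v\}}$ is non-convex as well; together with the side containing $z$ this yields a triangle with no convex side, contradicting that $\calD$ is convex. Here I expect a short case split on the position of $z$ to be necessary. If $z<v$, then $z$ shares the non-convex side with $v_i,v_i+1$, and I would look on the opposite side, which contains vertices of index below $u$ (such as $w_{i+1}^R$) and above $v$ (such as $v_{i+1}$, using $v\le v_{i+1}$); combining the non-star-crossing edge $\{w_{i+1}^R,v_{i+1}\}$ from \cref{lem:leftrightmostwitness} with the nestedness of $T_{b_i}$ inside the convex side of $T_{b_{i+1}}$ (\cref{lem:twobadedges1}) should produce an edge between two vertices of that side which leaves it. If instead $z>v$, then $v_i,v_i+1$ lie opposite to $z$, and the bad edge $b_i$ together with one of its witnesses already certifies non-convexity of that side, since a witness' star edge crosses $b_i$. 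In either case $T_{\{u,v\}}$ has both sides non-convex, the desired contradiction. I anticipate that carefully verifying which concrete edge crosses $T_{\{u,v\}}$ on the second side, and handling the degenerate positions $v=v_{i+1}$ and $z$ adjacent to the ends of $R_i$, will be the most delicate part of the write-up.
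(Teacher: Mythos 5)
Your strategy---deriving the contradiction from $T_{\{u,v\}}$ alone by showing both of its sides are non-convex---is not the paper's route, and as written it breaks down at the sidedness analysis. You place $z$ on a side of $T_{\{u,v\}}$ according to its position in the rotation of $v_\star$, but the crossing of $\{z,v_\star\}$ with $\{u,v\}$ flips $z$ to the side \emph{opposite} its rotation sector: for $z<v$ the star edge $\{z,v_\star\}$ starts in the sector of the arc from $u$ to $v$ and crosses $\{u,v\}$ once, so $z$ ends up on the side \emph{not} containing $v_i$, $v_i+1$ and the witnesses of $b_i$; for $z>v$ it starts in the other sector and the crossing places $z$ on the \emph{same} side as $v_i$, $v_i+1$. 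So both of your cases are inverted. This is fatal for your $z>v$ argument: since $\{u,v\}$ lies in the (closed) convex side $S_i$ of $T_{b_i}$ and, by \cref{lem:nocrossingsoutsidelandR}, does not cross the star edges of witnesses of $b_i$, neither $b_i$ nor $\{w_i,v_\star\}$ crosses $T_{\{u,v\}}$ at all; hence both lie entirely inside the side that already contains $z$, and their mutual crossing happens \emph{inside} that side. It therefore certifies nothing about the other side---non-convexity of a side requires an edge with both endpoints in that side which leaves it, i.e., which crosses $T_{\{u,v\}}$. Your $z<v$ certificate fails for a parity reason: the edge $\{w_{i+1}^R,v_{i+1}\}$ is not star-crossing by \cref{lem:leftrightmostwitness}, so if both its endpoints lie on one side of $T_{\{u,v\}}$ it could only leave that side by crossing $\{u,v\}$ an even number of times, which is impossible in a simple drawing. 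In short, the facts you invoke cannot force the second side of $T_{\{u,v\}}$ to be non-convex, so no contradiction materializes from this triangle.

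The paper instead reduces to \cref{lem:twobadedges1}: it passes to the induced subdrawing on $w_{i+1},u,v,z,v_{i+1},v_{i+1}+1,v_\star$ (for $z>v$), respectively $u,w_i,v_i,v_i+1,z,v,v_\star$ (for $z<v$). In these subdrawings $u$ and $v$ become \emph{consecutive} in the rotation of $v_\star$, so $\{u,v\}$ literally becomes a bad edge with witness $z$; together with the bad edge $b_{i+1}$ (witness $w_{i+1}$), respectively $b_i$ (witness $w_i$), the cyclic order of the four relevant vertices violates property~\eqref{lem:twobadedges1:item3} of \cref{lem:twobadedges1}. If you unwind that lemma, the triangle with no convex side is not $T_{\{u,v\}}$ but one spanned by an edge joining the two bad pairs, namely $T_{\{u,v_{i+1}\}}$, respectively $T_{\{v_i,v\}}$, which picks up a crossing star edge on each of its sides. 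Your opening intuition---that $\{u,v\}$ behaves like a bad edge with witness $z$---is exactly the right one, but to exploit it you must delete the vertices between $u$ and $v$ so that the bad-edge machinery genuinely applies; working with $T_{\{u,v\}}$ in the full drawing cannot close the argument.
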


\begin{proof}
Assume towards a contradiction that the edge $\{u,v\}$ crosses a star edge $\{z,v_{\star}\}$ with $z \in R_i$.
Let $w_i$ and $w_{i+1}$ be witnesses of the bad edges $b_i = \{v_i,v_i+1\}$ and $b_{i+1} = \{v_{i+1}, v_{i+1}+1\}$, respectively.
We distinguish the two cases of $z > v$ and $z < v$, which are depicted in \Cref{fig:nocrossinginRiCase1,fig:nocrossinginRiCase2}.
For $z>v$, we consider the subdrawing $\calD_1$ induced by the 7 distinct vertices $w_{i+1},u,v,z,v_{i+1}, v_{i+1}+1,v_{\star}$, where $v_i+1=v$ is possible.
In $\calD_1$ the vertices $u$ and $v$ are consecutive in the rotation of $v_{\star}$, thus, $\{u,v\}$ and $b_{i+1}$ are both bad edges.
However, the respective cyclic order of the vertices around $v_{\star}$ violates Property~\eqref{lem:twobadedges1:item3} of \Cref{lem:twobadedges1}; a contradiction.
For $z<v$, we consider the subdrawing $\calD_2$ induced by $u,w_i,v_i,v_i+1,z,v,v_{\star}$, where $v=v_{i+1}$ is possible.
Again, since $u$ and $v$ are consecutive in the cyclic order around~$v_\star$, the edge $\{u,v\}$ is a bad edge with witness $z$ in $\calD_2$ and the order of vertices corresponding to the two bad edges violates Property~\eqref{lem:twobadedges1:item3} of \Cref{lem:twobadedges1}.
\end{proof}

The previous lemmas show that if $\{u, v\}$ with $u\in L_i$ and $v\in R_i$ crosses a star edge $\{z,v_{\star}\}$, then $z\in L_i$.
We now analyze crossings in $L_i$ and show that the edge $\{u,v\}$ cannot cross two star edges $\{z_1,v_{\star}\}$ and $\{z_2,v_{\star}\}$ with $z_1 <u <z_2$ and $z_1,z_2 \in L_i$; see \Cref{fig:crossingsonlyoneside}.

\begin{lemma}
\label{lem:crossingsonlyoneside}
No edge $\{u,v\}$ with $u \in L_i$ and $v \in R_i$ crosses two star edges $\{z_1,v_{\star}\}$ and $\{z_2,v_{\star}\}$ with $z_1 <u < z_2$ and $z_1,z_2 \in L_i$.
\end{lemma}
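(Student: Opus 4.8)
The plan is to assume the claimed crossings exist and then show that the triangle $T_{\{u,v\}}$ would have no convex side, contradicting that $\calD$ is convex. So suppose, towards a contradiction, that $\{u,v\}$ crosses both $\{z_1,v_\star\}$ and $\{z_2,v_\star\}$ with $z_1<u<z_2$ and $z_1,z_2\in L_i$. Since $z_2\in L_i$ and $v\in R_i$ give $z_2<w_i^L\le w_i^R<v_i<v$, we actually have the linear order $z_1<u<z_2<v$. I would then consider the triangle $T_{\{u,v\}}$ with vertices $u$, $v$, $v_\star$ and let $S_1,S_2$ be its two sides, where $S_1$ is the side whose local sector at $v_\star$ is bounded by the star edges $\{u,v_\star\}$ and $\{v,v_\star\}$ and contains the initial directions of all star edges $\{k,v_\star\}$ with $u<k<v$.

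The key step is to locate $z_1$ and $z_2$ on the two sides of $T_{\{u,v\}}$. Because $u<z_2<v$, the star edge $\{z_2,v_\star\}$ leaves $v_\star$ into the interior of $S_1$; being a star edge it cannot cross the adjacent star edges $\{u,v_\star\}$ and $\{v,v_\star\}$, so the only edge of $T_{\{u,v\}}$ it can meet is $\{u,v\}$, which by assumption it crosses exactly once. Hence it ends in the opposite side, i.e.\ $z_2\in S_2$. Symmetrically, since $z_1<u$ the vertex $z_1$ is not rotationally between $u$ and $v$, so $\{z_1,v_\star\}$ leaves $v_\star$ into $S_2$ and, crossing $\{u,v\}$ once, ends with $z_1\in S_1$. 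Now the contradiction is immediate: $v_\star$ is a vertex of $T_{\{u,v\}}$ and hence lies in both $S_1$ and $S_2$; in $S_1$ the edge $\{z_1,v_\star\}$ joins the two vertices $z_1$ and $v_\star$ of that side yet leaves it to cross $\{u,v\}$, so $S_1$ is not convex, and symmetrically $\{z_2,v_\star\}$ witnesses that $S_2$ is not convex. Thus $T_{\{u,v\}}$ has no convex side, contradicting convexity of $\calD$.

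The main obstacle is making the side-assignment of $z_1$ and $z_2$ fully rigorous, i.e.\ arguing from the rotation of $v_\star$ that star edges towards vertices rotationally between $u$ and $v$ and star edges towards vertices outside this range emanate into different sides of $T_{\{u,v\}}$, and that a star edge can meet $T_{\{u,v\}}$ only along $\{u,v\}$; this is exactly the ``all witnesses lie on one side'' phenomenon of \Cref{obs:ConvexSide}, transported to the non-consecutive edge $\{u,v\}$. Everything else is a direct unwinding of the definition of a convex side. One could instead attempt the reduction used in \Cref{lem:nocrossinginRi}, passing to a subdrawing in which $\{u,v\}$ becomes a bad edge and invoking property~\eqref{lem:twobadedges1:item3} of \Cref{lem:twobadedges1}; but since $z_2$ is rotationally between $u$ and $v$, the pair $\{u,v\}$ is not consecutive in such a subdrawing, so the direct no-convex-side argument above appears to be the cleaner route.
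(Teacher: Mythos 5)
Your proof is correct and takes essentially the same approach as the paper: the paper's (much terser) three-line proof also observes that $z_1$ and $z_2$ must lie on different sides of the triangle spanned by $u$, $v$, $v_\star$, and that the two crossing star edges then witness non-convexity of both sides. Your version merely spells out the rotation and side-assignment details that the paper leaves implicit, so the ``main obstacle'' you flag is already resolved by your own argument.
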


\begin{proof}
Assume towards a contradiction that $\{u,v\}$ crosses both edges $\{z_1,v_{\star}\}$ and $\{z_2,v_{\star}\}$.
Then $z_1$ and $z_2$ are on different sides of the triangle spanned by $\{u,v,v_{\star}\}$.
Since both star edges $\{z_1,v_{\star}\}$ and $\{z_2,v_{\star}\}$ cross the triangle, there is no convex side; a contradiction.
\end{proof}

\begin{figure}[htb]
\centering
\subcaptionbox{\label{fig:crossingsonlyonesideCase1}}[.49\textwidth]{\includegraphics[page = 1]{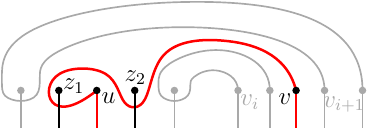}}
\subcaptionbox{\label{fig:crossingsonlyonesideCase2}}[.49\textwidth]{\includegraphics[page = 2]{edgescrossed3_dcg.pdf}}
\caption{Illustration for the proof of \Cref{lem:crossingsonlyoneside}.
In both cases \subref{fig:crossingsonlyonesideCase1} and \subref{fig:crossingsonlyonesideCase2} the red triangle $\{u,v,v_{\star}\}$ has no convex side.
Witnesses for the non-convexity are the edges $\{z_1,v_{\star}\}$ and $\{z_2,v_{\star }\}$.}
\label{fig:crossingsonlyoneside}
\end{figure}

Even though edges from $L_i$ to $R_i$ cannot cross star edges incident to vertices in $L_i$ with smaller and larger indices at the same time, crossings on one side cannot be avoided.
We focus on edges which only cross star edges with larger indices than the end-vertex in $L_i$.
As the following lemma shows, those edges help us to find edges from $L_i$ to $R_i$ that are not star-crossing.

\begin{lemma}
\label{lem:edgesfromlefttoright}
Let $u \in L_i$ and $v \in R_i$ and let $z$ be the largest index in $L_i$ with $z>u$ such that the edge $\{u,v\}$ crosses the star edge $\{z,v_{\star}\}$.
Then the following two statements~hold:
\begin{enumerate}
\item \label{lem:edgesfromlefttoright:item1} The edge $\{z,v\}$ is not star-crossing.
\item \label{lem:edgesfromlefttoright:item2} The edge $\{z+1,v\}$ does not cross any star edge $\{x',v_{\star}\}$ with $x' \in L_i$ and $x' \le z$.
\end{enumerate}
\end{lemma}

\begin{proof}
To show~\eqref{lem:edgesfromlefttoright:item1}, assume towards a contradiction that the edge $\{z,v\}$ crosses a star edge $\{x',v_{\star}\}$.
From \Cref{lem:nocrossingsoutsidelandR,lem:nocrossinginRi} we know that $x' \in L_i$.
Moreover, because a simple drawing of $K_{4}$ has at most one crossing and the edges $\{u,v\}$ and $\{z,v_{\star}\}$ already cross, the edge $\{z,v\}$ has no crossing with the triangle $T_{\{u,v\}}$.
In particular, $\{z,v\}$ cannot cross any star edge in the convex side of $T_{\{u,v\}}$, i.e., the side not containing~$z$.
The choice of $z$ implies that all edges $\{x,v_{\star}\}$ with $x \in L_i$ and $x >z$ do not cross $\{u,v\}$.
Hence $\{z,v\}$ can only cross star edges $\{x',v_{\star}\}$ with $x' \in L_i$ and $x' < z$; \Cref{fig:edgesfromlefttoright} shows two examples with $x' = x'_1 < u$ and $x' = x'_2 > u$.
However, then the triangle $T_{\{z,v\}}$ has no convex side:
The side containing the vertex~$x'$ is not convex since the edge $\{x',v_{\star}\}$ crosses $\{z,v\}$.
The other side contains the vertex~$u$ and is not convex since the edge $\{u,v\}$ crosses $\{z,v_{\star}\}$; a contradiction.

To show~\eqref{lem:edgesfromlefttoright:item2}, assume towards a contradiction that $\{z+1,v\}$ crosses a star edge $\{x',v_{\star}\}$ with $x' \in L_i$ and $x' \leq z$.
Since $\{z+1,v\}$ cannot cross its adjacent edges $\{u,v\}$ and $\{v,v_{\star}\}$, it crosses first $\{z,v_{\star}\}$ and later $\{u,v_{\star}\}$ in order to cross any star edge $\{x',v_{\star}\}$ with $x' \leq z$.
Note that this also holds if $z+1 = w_i^L$.
An illustration is given in \Cref{fig:edgesfromlefttoright2}.
Now consider the triangle $T = \{z,z+1,v\}$ and the star edge $\{u,v_{\star}\}$, which crosses $\{z+1,v\}$.
As a star edge, $\{u,v_{\star}\}$ is crossed neither by $\{z,z+1\}$, which is a good edge since $z+1 < v_1$, nor by $\{z,v\}$, which follows from part~\eqref{lem:edgesfromlefttoright:item1}.
Thus $\{u,v_{\star}\}$ crosses exactly one of the edges of $T$ and, consequently, $u$ and $v_\star$ lie in different sides of~$T$.
However none of those two sides is convex:
The edge $\{u,v\}$ crosses $\{z,v_{\star}\}$ but not $\{z+1,v_{\star}\}$.
Hence it crosses the good edge $\{z,z+1\}$ which shows that the side of $T$ containing $u$ is not convex.
The side of $T$ containing $v_\star$ is not convex because $\{z,v_{\star}\}$ crosses $\{z+1,v\}$.
This is a contradiction to the convexity.
\end{proof}

\begin{figure}[htb]
\centering
\includegraphics{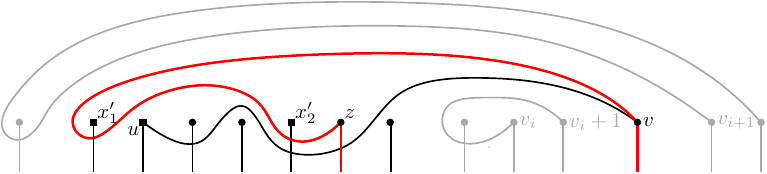}
\caption{Illustration for the proof of \Cref{lem:edgesfromlefttoright}\eqref{lem:edgesfromlefttoright:item1}. The red triangle $T_{\{z,v\}}$ has no convex side. Witnesses for the non-convexity are the edges $\{u,v\}$ and $\{x_i',v_{\star }\}$.}
\label{fig:edgesfromlefttoright}
\end{figure}

\begin{figure}[htb]
\centering
\includegraphics{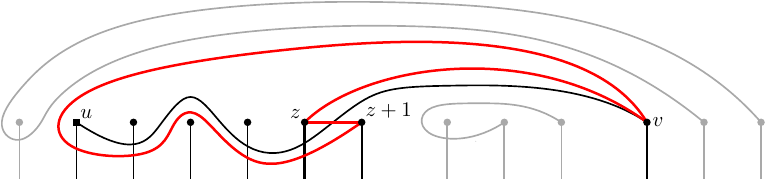}
\caption{Illustration for the proof of \Cref{lem:edgesfromlefttoright}\eqref{lem:edgesfromlefttoright:item2}. The red triangle has no convex side. Witnesses for the non-convexity are the edges $\{z,v_\star\}$ and $\{u,v\}$.}
\label{fig:edgesfromlefttoright2}
\end{figure}

We need one more definition to construct the plane Hamiltonian cycle.
Consider a fixed vertex $r \in R_i$ and all edges $\{u,r\}$ with $u \in L_i$ that cross a star edge $\{z,v_{\star}\}$ with $z>u$.
We are interested in the rightmost such vertex $z$, if it exists.
More formally, we define
\begin{align*}
l(r) = \max ( \{\, z \in L_i \mid \exists\, u \in L_i \text{ , } u<z : \{u,r\} \text{ crosses } \{z,v_{\star}\} \,\} \cup \{w^R_{i+1}\} ).
\end{align*}
Note that in the case where $l(r)$ is set to $w^R_{i+1}$, there is no edge $\{u,r\}$ that crosses a star edge to the right of $u$.
This especially implies that the edge $\{w_{i+1}^R +1,r\}$ is not star-crossing.
For most cases, we get two non-star-crossing edges for each vertex $r \in R_i$.

\begin{lemma}
\label{cor:edgeslefttoright}
For all $r \in R_i$: If $l(r) > w^R_{i+1}$, then the edge $\{l(r),r\}$ is not star-crossing. If $l(r)+1 < w^L_{i}$, then the edge $\{l(r)+1,r\}$ is not star-crossing.
\end{lemma}

\begin{proof}
The edge $\{l(r),r\}$ is not star-crossing by \Cref{lem:edgesfromlefttoright}\eqref{lem:edgesfromlefttoright:item1} if $l(r) \in L_i$, i.e., $l(r) > w^R_{i+1}$.
Moreover, if $l(r) > w^R_{i+1}$ \Cref{lem:edgesfromlefttoright}\eqref{lem:edgesfromlefttoright:item2} implies that $\{l(r)+1,r\}$ does not cross any star edge $\{x,v_{\star}\}$ with $x \leq l(r)$.
By the maximality of $l(r)$, the edge $\{l(r)+1,r\}$ does not cross any star edge $\{x,v_{\star}\}$ with $x > l(r)+1$ if $l(r)+1 \in L_i$.
The condition $l(r)+1 \in L_i$ is fulfilled exactly if $l(r)+1 < w^L_{i}$.
In the case in which $l(r) = w^R_{i+1}$ only appears if there is no edge $\{u,r\}$ crossing a star edge right of~$u$.
Hence the edge $\{l(r)+1,r\}$ is not star-crossing due to the definition of~$l(r)$.
\end{proof}

Together with \Cref{lem:leftrightmostwitness} this gives us our desired non-star-crossing edges from $L_i$ to~$R_i$.
To make sure that these edges do not cross each other, we show that the $l(r)$ have decreasing indices with increasing~$r$.

\begin{lemma}
\label{lem:l(vi)}
For $r,r' \in R_i$ with $r<r'$, we have $l(r) \geq l(r')$.
\end{lemma}

\begin{proof}
We consider the case $r' = r+1$.
For $r' > r+1$ the claim then follows by transitivity.
If $l(r+1) = w^R_{i+1}$, the claim clearly holds.
In the following, let $l(r+1) > w^R_{i+1}$ and $u<l(r+1)$ such that $\{u,r+1\}$ crosses $\{l(r+1),v_{\star}\}$.
Consider the triangle $T_{\{u,r+1\}}$, which is drawn blue in \Cref{fig:lefttoright}.
The side of $T_{\{u,r+1\}}$ containing $l(r+1)$ is not convex, which is witnessed by $\{l(r+1),v_{\star}\}$.
By the convexity of the drawing, the side not containing $l(r+1)$ is convex.
By \Cref{lem:nocrossinginRi}, the edge $\{u,r+1\}$ does not cross $\{r,v_{\star}\}$.
Hence the vertex $r$ and consequently the edge $\{u,r\}$ are contained in the unique convex side of $T_{\{u,r+1\}}$.
This implies that $\{u,r\}$ crosses the star edge $\{ l(r+1),v_{\star}\}$, which shows $l(r) \geq l(r+1)$.
\end{proof}

\begin{figure}[htb]
\centering
\includegraphics{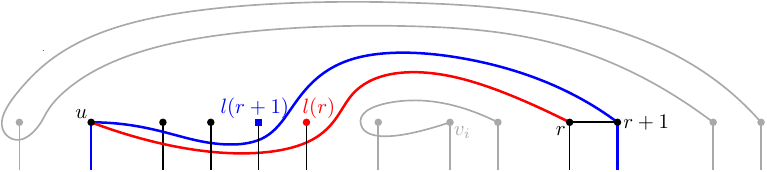}
\caption{Illustration of the proof of \Cref{lem:l(vi)}.}
\label{fig:lefttoright}
\end{figure}

The monotonicity of the vertices $l(r)$ implies the following claim, which we need for the construction of the Hamiltonian cycle.

\begin{observation}
\label{obs:remainingcases}
Let $r \in R_i$.
\begin{itemize}
\item If $l(r) = w^R_{i+1}$ for $r < v_{i+1}$, then for all $r' \in R_i$ with $r' >r$ it is $l(r') = l(r)$.
\item If $l(r)+1 = w^L_{i}$ for $r > v_{i}+1$, then for all $r' \in R_i$ with $r' < r$ it is $l(r') = l(r)$.
\end{itemize}
\end{observation}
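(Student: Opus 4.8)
The plan is to derive both items directly from the monotonicity of $l$ proved in \Cref{lem:l(vi)}, after pinning down the exact range of the function $l$. First I would observe that, by its definition as a maximum over a subset of $L_i \cup \{w_{i+1}^R\}$, the value $l(r)$ always lies in $L_i \cup \{w_{i+1}^R\}$. Since $L_i = \{\, x \mid w_{i+1}^R < x < w_i^L \,\}$, this set is exactly the integer interval from $w_{i+1}^R$ to $w_i^L - 1$; hence $w_{i+1}^R$ is its smallest element and $w_i^L - 1$ its largest, and $w_{i+1}^R \le l(r) \le w_i^L - 1$ holds for every $r \in R_i$.

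For the first item I would argue as follows. Assume $l(r) = w_{i+1}^R$ with $r < v_{i+1}$, and take any $r' \in R_i$ with $r' > r$. Then $l(r') \ge w_{i+1}^R$ because $w_{i+1}^R$ is the minimum possible value, while \Cref{lem:l(vi)} applied to $r < r'$ gives $l(r') \le l(r) = w_{i+1}^R$. The two inequalities force $l(r') = w_{i+1}^R = l(r)$.

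The second item is symmetric. Assuming $l(r) + 1 = w_i^L$, i.e.\ $l(r) = w_i^L - 1$ is the maximal element of the chain, and taking $r' \in R_i$ with $r' < r$, we have $l(r') \le w_i^L - 1 = l(r)$ since $w_i^L - 1$ is the maximum value of $l$, and $l(r') \ge l(r)$ by \Cref{lem:l(vi)} applied to $r' < r$. Hence $l(r') = w_i^L - 1 = l(r)$.

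I do not expect any real obstacle: the observation merely repackages the fact that a non-increasing function on $R_i$ which attains its global minimum $w_{i+1}^R$ at $r$ must retain that value for all indices to the right, and, symmetrically, once it reaches its global maximum $w_i^L - 1$ it must have held that value for all indices to the left. The hypotheses $r < v_{i+1}$ and $r > v_i + 1$ only serve to make the respective index sets $\{r' \in R_i \mid r' > r\}$ and $\{r' \in R_i \mid r' < r\}$ nonempty; the displayed inequalities are valid regardless.
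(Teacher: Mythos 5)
Your proof is correct and matches the paper's approach: the paper states this observation without a separate proof, noting only that it follows from the monotonicity of $l$ (\Cref{lem:l(vi)}), which is exactly your argument once you add the (correct) remark that $l(r)$ always lies in the integer interval from $w_{i+1}^R$ to $w_i^L-1$. Your explicit handling of the extremal values and the vacuousness role of the hypotheses $r < v_{i+1}$ and $r > v_i+1$ is a faithful fleshing-out of what the paper leaves implicit.
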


\begin{proof}[Proof of \Cref{theorem:convex_HC}]
The general idea to construct a plane Hamiltonian cycle that is not star-crossing is now as follows:
Starting at $v_1$ (the leftmost vertex incident to a bad edge), in each step we add the next unvisited vertex to the left (smaller index) or to the right (larger index) using non-star-crossing edges.
This yields a path through all vertices except~$v_\star$, which is plane by \Cref{obs:HCplane}.
By adding the two star edges incident to the end-vertices of this path, we then obtain the desired Hamiltonian cycle.

To determine which non-star-crossing edges are suitable for our path,
we relate vertices from the right block $R = \bigcup R_i = \{v_1+1,\ldots,n-2\}$ with vertices from the left block $L = \{1,\ldots,v_1\}$.
As we have shown, for each $i=1,\ldots,m-1$ and each vertex $r \in R_i $ 
there is a corresponding vertex $l(r) \in L_i \cup \{w_{i+1}^R\}$.
If $w_{i+1}^R < l(r) < w_i^L -1$ the two edges $\{l(r),r\}$ and $\{l(r)+1,r\}$ are not star-crossing (cf.~\Cref{cor:edgeslefttoright}).
In the two remaining cases, i.e., if $l(r) = w_{i+1}^R$ or $l(r) +1 = w_i^L$, at least one of the two edges is not star-crossing with the additional property that the next, respectively previous, vertex $r'$ in $R_i$ has the same value $l(r') = l(r)$ (cf.~\Cref{obs:remainingcases}).
Moreover, for increasing index of~$r \in R$ the values $l(r) \in L$ are decreasing (cf.~\Cref{lem:l(vi)}).

Using this notion, we explicitly describe the Hamiltonian cycle:
We initialize $x := v_1$ and $r := v_1+1$.
By repeating the following procedure, we iteratively extend the plane path visiting all vertices $\{x,\ldots,r-1\}$ to a plane path visiting all vertices of $\{x', \ldots, r'-1\}$ with $x'<x$ and $r'>r$.

We set $x' := l(r)$ and $r' := \min ( \{\, \tilde{r} \in R \mid r < \tilde{r}, l(r) \neq l(\tilde{r}) \,\} \cup \{n-1\} )$, i.e., $r'$ is the first vertex to the right of $r$ where $l(r') \neq l(r)$.
Note that for $r \in R_i$ with $r< v_{i+1}$, $r' \in R_i$ and hence the edges connecting two consecutive vertices in the rotation of $v_\star$ between $r$ and $r'$ are good edges.
We traverse $x,x-1,\ldots,x'+1$ via good edges, use the non-star-crossing edge $\{x'+1,r\} = \{l(r)+1,r\}$ to reach~$r$, traverse $r,r+1,\ldots,r'-1$ via good edges, and  use the non-star-crossing edge $\{r'-1,x'\} = \{r'-1,l(r'-1)\}$ to reach~$x'$.
Note the edge $\{x'+1,r\} = \{l(r)+1,r\}$ is not star-crossing for $x' + 1 = l(r)+1 < w_{i}^L$ by \Cref{cor:edgeslefttoright}.
If $x' +1 = l(r)+1 = w_{i}^L$, then by definition of $r$ and \Cref{obs:remainingcases} it is $r =v_i+1$.
Hence the edge $\{x'+1,r\} = \{l(r)+1,r\}$ is not star-crossing by \Cref{lem:leftrightmostwitness}.
Similarly, the edge $\{r'-1,x'\} =\{r'-1,l(r'-1)\} $ is not star-crossing.
This follows for $l(r'-1) > w_{i+1}^R$ from \Cref{cor:edgeslefttoright}.
In the remaining case, if $x' = l(r'-1) = w_{i+1}^R$, then by definition of $r'$ and \Cref{obs:remainingcases} it is $r'-1 =v_{i+1}$.
Hence the edge $\{r'-1,x'\} =\{r'-1,l(r'-1)\} $ is not star-crossing by \Cref{lem:leftrightmostwitness}.

We repeat this step with $x'$ and $r'$ in the roles of $x$ and~$r$, respectively, until we have included vertex $n-2$ to the path.
Then we traverse all remaining non-star vertices $x',{x'-1},\ldots,1,n-1$ in this order via good edges, and close the Hamiltonian cycle via the two star edges $\{n-1,v_\star\}$ and $\{v_\star,v_1\}$.

\subparagraph*{Running time:}

In a first preprocessing step, we compute all bad edges.
This is possible in $\bigO(n^2)$ time:
for each of the $n-1$ edges $\{i,i+1\}$ there are at most $n-3$ potential witnesses to test.
This also determines the number $m$ of bad edges and all values $v_i$, $w_i^L$, and~$w_i^R$.

In a second preprocessing step, we compute the value $l(r)$ for every $r$.
We claim that this can be done in $\bigO(n^2)$ time.
To determine $l(r)$ for every $r \in R_i$, recall that $l(r+1) \leq l(r)$ due to~\Cref{lem:l(vi)}.
Thus, we start with the smallest index $r \in R_i$ and with the largest index $l \in L_i$.
We then iteratively check if $l = l(r)$ by testing whether $\{l,v_{\star}\}$ crosses some edge $\{l',r\}$ with $l' < l$.
If the answer is yes, we have found $l = l(r)$ and increase $r$ by one, otherwise $l \neq l(r)$ and we decrease $l$ by one.
In total, we consider a linear number of candidate pairs $(l,r)$ in this process to determine all values $l(r)$.
And for each such pair $(l,r)$, we check a linear number of potential crossings $\{l,v_{\star}\}$ with $\{l',r\}$.
Altogether, this computes all values $l(r)$ in $\bigO(n^2)$ time.

Observe that after this preprocessing we can decide in constant time, which edge to add next to the path in each step of the algorithm.
And since the final cycle has exactly $n$ edges, building it takes $\bigO(n)$ time.
Hence, the total running time is $\bigO(n^2)$, which completes the proof of \Cref{theorem:convex_HC}.
\end{proof}

\section{Conclusion}
\label{sec:discussion}

In this article, we investigated Rafla's conjecture on plane Hamiltonian cycles and variations of it in the restricted setting of convex drawings.
Since our proofs use specific properties, which do not necessarily apply in the more general setting of simple drawings, not all shown statements are true for simple drawings.

A non-convex simple drawing for which some structural results do not hold is the Type~V, which is also known as the \emph{twisted drawing} of~$K_5$ and depicted in \Cref{fig:n5_fivetypes}.
In general, the twisted drawing of $K_n$ contains exactly one edge $e$ that crosses all its independent edges.
Hence $e$ cannot be contained in any plane Hamiltonian path (cf.~\Cref{thm:HP_prescribed_edges}).
Starting with a star at one end-vertex of $e$, there is no possibility to obtain a Hamiltonian path on the remaining vertices not crossing the star edges.
Moreover, the drawings of $K_6$ and $K_7$ depicted in \Cref{fig:nostar+HC} have not a single plane subdrawing consisting of a spanning star and a plane Hamiltonian~cycle (cf.~\Cref{theorem:convex_HC}); for $n \geq 8$ again the twisted drawing of $K_n$ has that property.

\begin{figure}[htb]
\centering
\subcaptionbox{\label{fig:nostar+HC-K6}}[.49\textwidth]{\includegraphics[page=1]{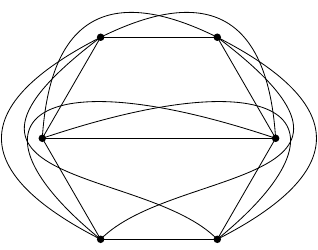}}
\subcaptionbox{\label{fig:nostar+HC-K7}}[.49\textwidth]{\includegraphics[page=2]{n6_no-star+hc_dcg.pdf}}
\caption{Non-convex drawings of \subref{fig:nostar+HC-K6}~$K_6$ and \subref{fig:nostar+HC-K7}~$K_7$ in which no vertex admits a plane Hamiltonian cycle not crossing the star edges.}
\label{fig:nostar+HC}
\end{figure}

Similar to the proof of \Cref{thm:convex-maximal-is-maximum} it can be shown that every maximal plane subdrawing of a convex drawing $\calD$ of $K_n$ contains at least $n-2$ triangular faces.
Thus, $\calD$ contains a family of at least $n-2$ pairwise disjoint empty triangles.
This is very different from simple drawings in general, for example, it follows from the definition of generalized twisted drawings~\cite{agtvw-2024-twisted_journal} that no three odd faces in them can be pairwise disjoint.
On the other hand, Zeng~\cite{z-2025-ndfstg} recently showed that every simple drawing of $K_n$ contains a linear size family of pairwise disjoint (not necessarily empty) $4$-faces, and conjectured this to be true for all even $k$-faces.

It remains open whether Rafla's conjecture (\Cref{conj:rafla}) and \Cref{conj:stpath} hold in simple drawings.
Moreover, the computational data obtained by Bergold and Scheucher~\cite{bs-2025-isdknsat-arxiv}\footnote{%
In an earlier version~\cite{BFRS2023} we combined the computational experiments with a computer-assisted proof of \Cref{theorem:convex_HC}.
Eventually, we found a self-contained proof and decided to split the content into two parts.}
suggests that \Cref{cor:emptykcycle} and a weakening of \Cref{cor:extend_to_2n_minus_3} might apply to the general case of simple drawings.
Hence, we conclude this article with the following two strengthenings of Rafla's~conjecture.

\begin{conjecture}
\label{conjecture:empty-k-cycles}
For every simple drawing $\calD$ of $K_{n}$ and every integer $3 \leq k \leq n$ there exists an empty $k$-cycle in~$\calD$.
\end{conjecture}

\begin{conjecture}
\label{conjecture:rafla_2n_plus_3}
Every simple drawing of $K_n$ with $n \geq 3$ contains a plane Hamiltonian subdrawing on $2n-3$ edges.
\end{conjecture}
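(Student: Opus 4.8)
The plan is to reduce the conjecture to a statement about triangulating one side of a plane Hamiltonian cycle, and then to attack that statement by induction on~$n$. The first observation is that a plane subdrawing with $2n-3$ edges containing a Hamiltonian cycle~$C$ is precisely the cycle ($n$ edges) together with $n-3$ pairwise non-crossing chords of $C$ that all lie in one common side of $C$; such a family triangulates that side, which on the sphere is a topological $n$-gon. Hence it suffices to show that every simple drawing of $K_n$ admits a plane Hamiltonian cycle one of whose two sides can be \emph{completely} triangulated by edges of the drawing. I would pursue this strengthened form directly, because the statement already contains Rafla's conjecture (\cref{conj:rafla}): a complete proof must produce the Hamiltonian cycle itself, so the cycle and its triangulation should be constructed together rather than by first invoking an (as yet unproven) Hamiltonian cycle.

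The star-edge strategy that settles the convex case is provably unavailable here, which shapes the rest of the plan. In a convex drawing \cref{theorem:convex_HC} yields a Hamiltonian cycle through a chosen vertex $v_\star$ crossing no star edge, and adding the $n-3$ star edges not already on the cycle triangulates the side containing~$v_\star$, recovering \cref{cor:extend_to_2n_minus_3}. For general simple drawings this collapses: the $K_6$ in \cref{fig:nostar+HC} admits, for \emph{no} choice of $v_\star$, a plane Hamiltonian cycle avoiding the star edges. Consequently the triangulating chords cannot be taken to form a single star, and any proof must allow a genuinely unstructured triangulation.

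My main approach is an inductive ear-insertion. Assuming the strengthened statement for $K_{n-1}$, I would choose a vertex $u$ together with two further vertices $x,y$ such that the edges $\{x,u\},\{u,y\},\{x,y\}$ of the fixed drawing $\calD$ bound an \emph{empty} triangle whose interior on one side is free of vertices; the existence of empty triangles is guaranteed by Harborth~\cite{Harborth98} and by~\cite{AichholzerHPRSV2015}. Deleting $u$ leaves the induced simple drawing of $K_{n-1}$, to which induction supplies a plane Hamiltonian cycle $C'$ and a triangulation of one of its sides, that is, $2n-5$ specific edges of $\calD$. If $\{x,y\}$ can be arranged to be a cycle edge of $C'$ bounding a face on the triangulated side, then reinserting $u$ by replacing $\{x,y\}$ in $C'$ with the path $x\,u\,y$, while keeping $\{x,y\}$ as a chord, produces a Hamiltonian cycle on $n$ vertices with exactly $2n-5+2 = 2n-3$ edges. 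The result is plane provided the two curves $\{x,u\}$ and $\{u,y\}$ of $\calD$ cross none of the $2n-5$ inductive edges.

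The hard part, and the reason the statement remains a conjecture, is exactly this last compatibility requirement: because $\calD$ is a \emph{topological} drawing, the edges $\{x,u\},\{u,y\}$ are fixed curves that cannot be redrawn, so one must guarantee both that some empty triangle incident to~$u$ has its base $\{x,y\}$ realized as a cycle edge of the inductive Hamiltonian cycle on the correct side, and that these two fixed curves avoid all $2n-5$ previously chosen edges. In the convex setting such crossing conditions are tamed by the convexity of triangle sides (as in \cref{lemma:edges_nonconvexside} and the results of Arroyo et al.~\cite{ArroyoMRS2017_convex}), but without convexity a chord may cross the Hamiltonian cycle --- the very phenomenon that lets a maximal plane subdrawing drop to $\frac{3n}{2}$ edges in the tight example of Garc\'ia, Pilz, and Tejel~\cite{GarciaTejelPilz2021}. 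I would therefore strengthen the induction hypothesis so that the triangulated side and the reinsertion face are tied to a carefully chosen extremal vertex $u$ (for instance one on the outer face, or one minimizing a suitable potential), forcing the reinserted ear into an empty region. Proving that such a uniformly insertable vertex always exists, across all simple drawings, is the step I expect to be genuinely difficult.
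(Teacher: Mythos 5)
This statement is \cref{conjecture:rafla_2n_plus_3} of the paper, i.e., an \emph{open conjecture}: the paper proves it only for convex drawings (\cref{cor:extend_to_2n_minus_3}, via \cref{thm:rafla_convex,thm:convex-maximal-is-maximum}) and offers computational evidence for the general case, so there is no proof in the paper to compare yours against. Your proposal, to its credit, recognizes this, but it is a research plan rather than a proof, and its decisive step is explicitly left unproven: you must find, in an arbitrary simple drawing, an empty triangle $x\,u\,y$ whose base $\{x,y\}$ is realized as a cycle edge of the inductive Hamiltonian cycle on the triangulated side, \emph{and} whose two fixed curves $\{x,u\},\{u,y\}$ avoid all $2n-5$ previously chosen edges. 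Nothing in your argument makes progress on either requirement. Note in particular that ``empty'' triangles in simple drawings exclude only vertices from one side, not edges: edges of the drawing of $K_{n-1}$ may freely cross the triangle $x\,u\,y$, and after deleting $u$ you have no control over which face of the inductive plane subdrawing contains the point $u$, so even choosing $\{x,y\}$ on the boundary of a face does not protect the reinserted ear. This is exactly the phenomenon behind the $\frac{3n}{2}$-edge maximal example of Garc\'ia, Pilz, and Tejel, which you cite but do not neutralize.

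There is also a concrete error earlier in the reduction. You claim a plane Hamiltonian subdrawing with $2n-3$ edges is ``precisely'' the cycle plus $n-3$ pairwise non-crossing chords all lying in \emph{one common side} of $C$. That is false as a characterization: the $n-3$ chords may be distributed over both sides of the cycle (each side receiving a non-crossing family), and then neither side is triangulated. Your reduction is therefore only a sufficiency, and it replaces the conjecture by a \emph{strictly stronger} statement --- that some plane Hamiltonian cycle has one side completely triangulatable by edges of the drawing. That strengthening does hold in the convex case (the construction of \cref{theorem:convex_HC} plus the star edges fan-triangulates the side containing $v_\star$), but your own observation about \cref{fig:nostar+HC} shows the star mechanism fails in general, and you give no reason to believe the one-sided-triangulation form survives for all simple drawings; if it fails, your whole induction attacks a false target even while the conjecture remains true. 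So: genuine gap, on two counts --- the strengthened target is unjustified, and the ear-insertion compatibility step (which you correctly identify as the crux) is missing.
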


\bibliography{references.bib}

@inproceedings{AbregoAFHOORSV2015,
  title={All Good Drawings of Small Complete Graphs},
  author={{\'A}brego, Bernardo M. and Aichholzer, Oswin and Fern{\'a}ndez{-}Mer{\-}chant, Silvia and Hackl, Thomas and Pammer, J{\"u}rgen and Pilz, Alexander and Ramos, Pedro and Salazar, Gelasio and Vogtenhuber, Birgit},
  booktitle={Proceedings of the 31st European Workshop on Computational Geometry (EuroCG 2015)},
  pages={57--60},
  year={2015},
  url={http://eurocg15.fri.uni-lj.si/pub/eurocg15-book-of-abstracts.pdf}
}

@article{aov-2024-tcfhcsdcg,
  title={Towards Crossing-Free {H}amiltonian Cycles in Simple Drawings of Complete Graphs},
  author={Aichholzer, Oswin and Orthaber, Joachim and Vogtenhuber, Birgit},
  journal={Computing in Geometry and Topology},
  volume={3},
  number={2},
  pages={5:1--5:30},
  year={2024},
  note={\textit{Special Issue of few Selected Papers from EuroCG 2023.}},
  doi={10.57717/cgt.v3i2.47}
}

@inproceedings{BFROS_SOCG24,
  title={Plane {H}amiltonian Cycles in Convex Drawings},
  author={Bergold, Helena and Felsner, Stefan and M{. }Reddy, Meghana and Orthaber, Joachim and Scheucher, Manfred},
  booktitle={Proceedings of the 40th International Symposium on Computational Geometry (SoCG 2024)},
  pages={18:1--18:16},
  year={2024},
  doi={10.4230/LIPIcs.SoCG.2024.18}
}

@misc{BFRS2023,
  title={Using {SAT} to Study Plane {H}amiltonian Substructures in Simple Drawings},
  author={Bergold, Helena and Felsner, Stefan and M{. }Reddy, Meghana and Scheucher, Manfred},
  year={2023},
  eprint={2305.09432v1},
  archivePrefix={arXiv},
  primaryClass={cs.CG}
}

@misc{bs-2025-isdknsat-arxiv,
  title={Investigating Simple Drawings of {$K_n$} using {SAT}},
  author={Bergold, Helena and Scheucher, Manfred},
  year={2025},
  eprint={2504.02650},
  archivePrefix={arXiv},
  primaryClass={cs.CG}
}

@article{GarciaTejelPilz2021,
  title={On Plane Subgraphs of Complete Topological Drawings},
  author={Garc{\'i}a, Alfredo and Pilz, Alexander and Tejel, Javier},
  journal={Ars Mathematica Contemporanea},
  volume={20},
  number={1},
  pages={69--87},
  year={2021},
  doi={10.26493/1855-3974.2226.e93}
}

@article{AichholzerHPRSV2015,
  title={Empty Triangles in Good Drawings of the Complete Graph},
  author={Aichholzer, Oswin and Hackl, Thomas and Pilz, Alexander and Ramos, Pedro and Sacrist{\'a}n, Vera and Vogtenhuber, Birgit},
  journal={Graphs and Combinatorics},
  volume={31},
  number={2},
  pages={335--345},
  year={2015},
  doi={10.1007/s00373-015-1550-5}
}

@article{ArroyoRS2020,
  title={Extending Drawings of Complete Graphs into Arrangements of Pseudocircles},
  author={Arroyo, Alan and Richter, R. Bruce and Sunohara, Matthew},
  journal={SIAM Journal on Discrete Mathematics},
  volume={35},
  number={2},
  pages={1050--1076},
  year={2021},
  doi={10.1137/20M1313234}
}

@article{ArroyoMRS2017_convex,
  title={Convex Drawings of the Complete Graph: Topology meets Geometry},
  author={Arroyo, Alan and McQuillan, Dan and Richter, R. Bruce and Salazar, Gelasio},
  journal={Ars Mathematica Contemporanea},
  volume={22},
  number={3},
  pages={4:1--4:27},
  year={2022},
  doi={10.26493/1855-3974.2134.ac9}
}

@article{sz-2024-upcstg,
  title={Unavoidable Patterns in Complete Simple Topological Graphs},
  author={Suk, Andrew and Zeng, Ji},
  journal={Discrete \& Computational Geometry},
  volume={73},
  number={1},
  pages={79--91},
  year={2025},
  note={\textit{A conference version of this paper appeared at GD 2022 (\href{https://doi.org/10.1007/978-3-031-22203-0_1}{doi:10.1007/978-3-031-22203-0\_1}).}},
  doi={10.1007/s00454-024-00658-6}
}

@article{ArroyoMRS2017_pseudolines,
  title={{L}evi's Lemma, Pseudolinear Drawings of {$K_n$}, and Empty Triangles},
  author={Arroyo, Alan and McQuillan, Dan and Richter, R. Bruce and Salazar, Gelasio},
  journal={Journal of Graph Theory},
  volume={87},
  number={4},
  pages={443--459},
  year={2018},
  doi={10.1002/jgt.22167}
}

@inproceedings{agtvw-2023-crsgtd5t,
  title={Characterizing Rotation Systems of Generalized Twisted Drawings via 5-Tuples},
  author={Aichholzer, Oswin and Garc{\'i}a, Alfredo and Tejel, Javier and Vogtenhuber, Birgit and Weinberger, Alexandra},
  booktitle={Proceedings of the 20th Spanish Meeting on Computational Geometry (EGC 2023)},
  pages={71},
  year={2023},
  url={https://egc23.web.uah.es/wp-content/uploads/2023/06/EGC23_paper_18.pdf}
}

@article{Harborth98,
  title={Empty Triangles in Drawings of the Complete Graph},
  author={Harborth, Heiko},
  journal={Discrete Mathematics},
  volume={191},
  number={1-3},
  pages={109--111},
  year={1998},
  doi={10.1016/S0012-365X(98)00098-3}
}

@inproceedings{arss-2003-zppt,
  title={The Zigzag Path of a Pseudo-Triangulation},
  author={Aichholzer, Oswin and Rote, G{\"u}nter and Speckmann, Bettina and Streinu, Ileana},
  booktitle={Proceedings of the 8th International Workshop on Algorithms and Data Structures (WADS 2003)},
  pages={377--388},
  year={2003},
  doi={10.1007/978-3-540-45078-8_33}
}

@article{agor-2007-nlbnkercn,
  title={New Lower Bounds for the Number of {$(\leq k)$}-Edges and the Rectilinear Crossing Number of {$K_{n}$}},
  author={Aichholzer, Oswin and Garc{\'i}a, Jes{\'u}s and Orden, David and Ramos, Pedro},
  journal={Discrete \& Computational Geometry},
  volume={38},
  number={1},
  pages={1--14},
  year={2007},
  doi={10.1007/s00454-007-1325-8}
}

@article{blprs-2007-cheopdt,
  title={The Convex Hull of Every Optimal Pseudolinear Drawing of {$K_{n}$} is a Triangle},
  author={Balogh, J{\'o}zsef and Leanos, Jes{\'u}s and Pan, Shengjun and Richter, R. Bruce and Salazar, Gelasio},
  journal={Australasian Journal of Combinatorics},
  volume={38},
  pages={155-162},
  year={2007},
  url={https://ajc.maths.uq.edu.au/pdf/38/ajc_v38_p155.pdf}
}

@article{acfls-2012-kechlgdkn,
  title={On {$(\leq k)$}-Edges, Crossings, and Halving Lines of Geometric Drawings of {$K_{n}$}},
  author={{\'A}brego, Bernardo M. and Cetina, Mario and Fern{\'a}ndez{-}Merchant, Silvia and Lea{\~n}os, Jes{\'u}s and Salazar, Gelasio},
  journal={Discrete \& Computational Geometry},
  volume={48},
  number={1},
  pages={192--215},
  year={2012},
  doi={10.1007/s00454-012-9403-y}
}

@article{gtvw-2023-etgtdkn,
  title={Empty Triangles in Generalized Twisted Drawings of {$K_n$}},
  author={Garc{\'i}a, Alfredo and Tejel, Javier and Vogtenhuber, Birgit and Weinberger, Alexandra},
  journal={Journal of Graph Algorithms and Applications},
  volume={27},
  number={8},
  pages={721--735},
  year={2023},
  note={\textit{Special Issue of few Selected Papers from GD 2022 (\href{https://doi.org/10.1007/978-3-031-22203-0_4}{doi:10.1007/978-3-031-22203-0\_4}).}},
  doi={10.7155/jgaa.00637}
}

@article{PachSolymosiToth2003,
  title={Unavoidable Configurations in Complete Topological Graphs},
  author={Pach, J{\'a}nos and Solymosi, J{\'o}zsef and T{\'o}th, G{\'e}za},
  journal={Discrete \& Computational Geometry},
  volume={30},
  number={2},
  pages={311--320},
  year={2003},
  note={\textit{A conference version of this paper appeared at GD 2000 (\href{https://doi.org/10.1007/3-540-44541-2_31}{doi:10.1007/3-540-44541-2\_31}).}},
  doi={10.1007/s00454-003-0012-9}
}

@article{pt-2010-detg,
  title={Disjoint Edges in Topological Graphs},
  author={Pach, J{\'a}nos and T{\'o}th, G{\'e}za},
  journal={Journal of Combinatorics},
  volume={1},
  number={3},
  pages={335--344},
  year={2010},
  note={\textit{A conference version of this paper appeared at IJCCGGT 2003 (\href{https://doi.org/10.1007/978-3-540-30540-8_15}{doi:10.1007/978-3-540-30540-8\_15}).}},
  doi={10.4310/JOC.2010.v1.n3.a4}
}

@article{FoxSudakov2009,
  title={Density theorems for bipartite graphs and related {R}amsey-type results},
  author={Fox, Jacob and Sudakov, Benny},
  journal={Combinatorica},
  volume={29},
  number={2},
  pages={153--196},
  year={2009},
  doi={10.1007/s00493-009-2475-5}
}

@article{Suk13,
  title={Disjoint Edges in Complete Topological Graphs},
  author={Suk, Andrew},
  journal={Discrete \& Computational Geometry},
  volume={49},
  number={2},
  pages={280--286},
  year={2013},
  note={\textit{A conference version of this paper appeared at SoCG 2012 (\href{https://doi.org/10.1145/2261250.2261308}{doi:10.1145/2261250.2261308}).}},
  doi={10.1007/s00454-012-9481-x}
}

@article{Fulek2014,
  title={Estimating the Number of Disjoint Edges in Simple Topological Graphs via Cylindrical Drawings},
  author={Fulek, Radoslav},
  journal={SIAM Journal on Discrete Mathematics},
  volume={28},
  number={1},
  pages={116--121},
  year={2014},
  note={\textit{A conference version related to this paper appeared at SoCG 2013 (\href{https://doi.org/10.1145/2462356.2462394}{doi:10.1145/2462356.2462394}).}},
  doi={10.1137/130925554}
}

@article{RuizVargas17,
  title={Many Disjoint Edges in Topological Graphs},
  author={Ruiz{-}Vargas, Andres J.},
  journal={Computational Geometry},
  volume={62},
  pages={1--13},
  year={2017},
  note={\textit{A conference version of this paper appeared at LAGOS 2015 (\href{https://doi.org/10.1016/j.endm.2015.07.006}{doi:10.1016/j.endm.2015.07.006}).}},
  doi={10.1016/j.comgeo.2016.11.003}
}

@article{agtvw-2024-twisted_journal,
  title={Twisted Ways to Find Plane Structures in Simple Drawings of Complete Graphs},
  author={Aichholzer, Oswin and Garc{\'i}a, Alfredo and Tejel, Javier and Vogtenhuber, Birgit and Weinberger, Alexandra},
  journal={Discrete \& Computational Geometry},
  volume={71},
  number={1},
  pages={40--66},
  year={2024},
  note={\textit{Special Issue of few Selected Papers from SoCG 2022 (\href{https://doi.org/10.4230/LIPIcs.SoCG.2022.5}{doi:10.4230/LIPIcs.SoCG.2022.5}).}},
  doi={10.1007/s00454-023-00610-0}
}

@phdthesis{Rafla1988,
  title={The Good Drawings {$D_n$} of the Complete Graph {$K_n$}},
  author={Rafla, Nabil H.},
  year={1988},
  school={McGill University, Montreal},
  url={https://escholarship.mcgill.ca/concern/theses/x346d4920}
}

@article{z-2025-ndfstg,
  title={Note on Disjoint Faces in Simple Topological Graphs},
  author={Zeng, Ji},
  journal={Journal of Combinatorial Theory, Series B},
  volume={171},
  pages={28--35},
  year={2025},
  doi={10.1016/j.jctb.2024.11.002}
}

@phdthesis{Bergold_Diss23,
  title={Signotopes and Convex Drawings},
  author={Bergold, Helena},
  year={2023},
  school={Freie Universit{\"a}t Berlin},
  url={http://dx.doi.org/10.17169/refubium-42367}
}

\end{document}